\newcommand{\lcm}{{\rm lcm}}
\newcommand{\ord}{{\rm ord}}
\newcommand{\bZ}{{\mathbf{Z}}}
\newcommand{\gf}{{\rm GF}}
\newcommand{\C}{{\cal C}} 
\newcommand{\U}{{\cal U}}
\newcommand{\V}{{\mathcal{V}}}
\newcommand{\D}{{\cal D}}
\newtheorem{theorem}{Theorem}
\newtheorem{proposition}[theorem]{Proposition}
\newtheorem{example}{Example}
\def\BibTeX{{\rm B\kern-.05em{\sc i\kern-.025em b}\kern-.08em
    T\kern-.1667em\lower.7ex\hbox{E}\kern-.125emX}}
\begin{document}

\title{Cyclotomic Constructions of Cyclic Codes with Length Being the Product of Two Primes}

\author{Cunsheng Ding\thanks{
C. Ding is with the Department of Computer Science and Engineering, 
The Hong Kong University of Science and Technology, Clear Water Bay, 
Kowloon, Hong Kong. Email: cding@ust.hk}}

\date{\today}
\maketitle

\begin{abstract} 
Cyclic codes are an interesting type of linear codes and have applications in communication 
and storage systems due to their efficient encoding and decoding algorithms. They have been 
studied for decades and a lot of  progress has been made. In this paper, three types of 
generalized cyclotomy of order two and three classes of cyclic codes of length $n_1n_2$ and 
dimension $(n_1n_2+1)/2$ are presented and analysed, where $n_1$ and $n_2$ are two distinct 
primes. Bounds on their minimum odd-like weight are also proved. The three constructions 
produce the best cyclic codes in certain cases. 
\end{abstract}

\begin{keywords} 
Cyclic codes, cyclotomy, difference sets, duadic codes, linear codes.   
\end{keywords}

\section{Introduction} 

Let $q$ be a power of a prime. 
A linear $[n,k,\omega; q]$ code is a $k$-dimensional subspace of $\gf(q)^n$ 
with minimum (Hamming) distance $\omega$. 
Let $A_i$ denote the number of codewords with Hamming weight $i$ in a code 
$\C$ of length $n$. The {\em weight enumerator} of $\C$ is defined by 
$$ 
1+A_1x+A_2x^2+ \cdots + A_nx^n. 
$$

A linear $[n,k]$ code $\C$ over the finite field $\gf(q)$ is called {\em cyclic} if 
$(c_0,c_1, \cdots, c_{n-1}) \in \C$ implies $(c_{n-1}, c_0, c_1, \cdots, c_{n-2}) 
\in \C$.  
Let $\gcd(n, q)=1$. By identifying any vector $(c_0,c_1, \cdots, c_{n-1}) \in \gf(q)^n$ 
with  
$$ 
c_0+c_1x+c_2x^2+ \cdots + c_{n-1}x^{n-1} \in \gf(q)[x]/(x^n-1), 
$$
any code $\C$ of length $n$ over $\gf(q)$ corresponds to a subset of $\gf(q)[x]/(x^n-1)$. 
The linear code $\C$ is cyclic if and only if the corresponding subset in $\gf(q)[x]/(x^n-1)$ 
is an ideal of the ring $\gf(q)[x]/(x^n-1)$. 

Note that every ideal of $\gf(q)[x]/(x^n-1)$ is principal. Let $\C=(g(x))$ be a 
cyclic code. Then $g(x)$ is called the {\em generator polynomial} and 
$h(x)=(x^n-1)/g(x)$ is referred to as the {\em parity-check} polynomial of 
$\C$. 

A vector $(c_0, c_1, \cdots, c_{n-1}) \in \gf(q)^n$ is said to be {\em even-like} 
if $\sum_{i=0}^{n-1} c_i =0$, and is {\em odd-like} otherwise. The minimum weight 
of the even-like codewords, respectively the odd-like codewords of a code is the 
minimum even-like weight, respectively the minimum odd-like weight of the code.  

The error correcting capability of cyclic codes may not be as good as some other linear 
codes. However, cyclic codes have wide applications in storage and communication 
systems because they have efficient encoding and decoding algorithms 
\cite{Chie,Forn,Pran,Rong}. 
For example, Reed–-Solomon codes have found important applications from deep-space 
communication to consumer electronics. They are prominently used in consumer 
electronics such as CDs, DVDs, Blu-ray Discs, in data transmission technologies 
such as DSL \& WiMAX, in broadcast systems such as DVB and ATSC, and in computer 
applications such as RAID 6 systems.

Cyclic codes have been studied for decades and a lot of  progress has been made 
(see for example, \cite{BS06,EL,HPbook,JLX,MoisT,MV99,Ples87,LintW}). 
However, the total number of cyclic codes of length $n$ and dimension $k$ over a 
finite field $\gf(q)$ is in general unknown, not to mention the construction of all of 
them.  An important problem in studying cyclic codes is to find a simple construction 
of the best cyclic codes.  

In this paper, three types of generalized cyclotomy of order two are 
described and three simple constructions of cyclic codes of length 
$n_1n_2$ and dimension $(n_1n_2+1)/2$ are presented, where $n_1$ and 
$n_2$ are two distinct primes. Bounds on their minimum odd-like weight 
are also proved. Some of the codes in this paper are among the best 
cyclic codes. This is the main motivation of investigating the three 
cyclotomic constructions of the cyclic codes in this paper.

\section{All cyclic codes with parameters $[n_1n_2, (n_1n_2+1)/2;q]$} 

Throughout this paper let $n_1$ and $n_2$ be two distinct odd primes, $n=n_1n_2$, and 
let $q$ be a power of a prime such that $\gcd(q, n)=1$. We also define 
$$ 
N=\ord_{n}(q)=\lcm(\ord_{n_1}(q), \ord_{n_2}(q)). 
$$
In this paper we define $\theta=\alpha^{(q^N-1)/n}$, where $\alpha$ is a generator of $\gf(q^N)^*$. 
Hence $\theta$ is an $n$th primitive root of unity in $\gf(q^N)^*$.

\subsection{The general case}
Let $S_0^{(n)}$ be the subgroup 
of $\bZ_n^*$ generated by $q$. Then the cardinality of $S_0^{(n)}$ is $N$. Let $S_i^{(n)}$ 
be all the cosets of the subgroup $S_0^{(n)}$, where $0 \le i < (n_1-1)(n_2-1)/N$. Define 
$$ 
F_j^{(n)}(x)    = \prod_{i \in S_j^{(n)}} (x-\theta^i).   
$$  
It is well known that all $F_j^{(n)}(x)$ are irreducible polynomials over $\gf(q)$ and have  
degree $N$.  

Similarly, for each $j \in \{1,2\}$ let $S_0^{(n_j)}$ be the subgroup of $\bZ_{n_j}^*$ generated 
by $q$. Then the cardinality of $S_0^{(n_j)}$ is $\ord_{n_j}(q)$. Let $S_i^{(n_j)}$ be all the cosets 
of the subgroup $S_0^{(n_j)}$, where $0 \le i < (n_j-1)/\ord_{n_j}(q)$. Let $\theta_j=
\alpha^{(q^N-1)/n_j}$. Define 
$$ 
F_i^{(n_j)}(x)    = \prod_{h \in S_i^{(n_j)}} (x-\theta_{j}^h).   
$$  
It is also well known that all $F_i^{(n_j)}(x)$ are irreducible polynomials over $\gf(q)$ and have  
degree $\ord_{n_j}(q)$.  In addition, we have 
$$ 
x^{n_j}-1=(x-1) \prod_{i=0}^{\frac{n_j-1}{\ord_{n_j}(q)}-1} F_i^{(n_j)}(x) 
$$ 
for each $j \in\{1,2\}$. 

Summarizing the discussions above, we have 
$$ 
x^{n}-1=(x-1) \prod_{i=0}^{\frac{n_1-1}{\ord_{n_1}(q)}-1} F_i^{(n_1)}(x) 
                       \prod_{i=0}^{\frac{n_2-1}{\ord_{n_2}(q)}-1} F_i^{(n_2)}(x) 
                       \prod_{i=0}^{\frac{(n_1-1)(n_2-1)}{\ord_{n}(q)}-1} F_i^{(n)}(x).  
$$

It is hard to give a specific formula for the total number of $[n, (n+1)/2; q]$ cyclic codes. 
However, this number is at least 
$$ 
{\frac{n_1-1}{\ord_{n_1}(q)} \choose \frac{n_1-1}{2\ord_{n_1}(q)} } 
{\frac{n_2-1}{\ord_{n_2}(q)} \choose \frac{n_2-1}{2\ord_{n_2}(q)} }
{\frac{(n_1-1)(n_2-1)}{\ord_{n}(q)} \choose \frac{(n_1-1)(n_2-1)}{2\ord_{n}(q)}}, 
$$  
provided that $(n_j-1)/\ord_{n_j}(q)$ is even for all $j$. 
In many cases, this is indeed the exact number of $[n, (n+1)/2; q]$ cyclic codes. 

In order to show that some of the codes constructed in this paper are the best cyclic codes, 
we provide information about all binary cyclic codes of 
length 119 and dimension 60, and all quaternary codes of length 35 and dimension 18 in the 
next three subsections. 

\subsection{All binary  cyclic codes with length 119 and dimension 60} 

We now consider all binary cyclic codes of length 119 and dimension 60.  
Note that the factorization of $x^{119}-1$ over $\gf(2)$ is 
$$ 
x^{119}-1=f_1(x)f_{31}(x)f_{32}(x)f_{81}(x)f_{82}(x)f_{241}(x)f_{242}(x)f_{243}(x)f_{244}(x), 
$$
where the irreducible polynomials 
\begin{eqnarray*} 
f_1(x) &=& x+1, \\
f_{31}(x) &=&  x^3 + x + 1, \\ 
f_{32}(x) &=& x^3 + x^2 + 1, \\
f_{81}(x) &=&  x^8 + x^5 + x^4 + x^3 + 1, \\ 
f_{82}(x) &=&  x^8 + x^7 + x^6 + x^4 + x^2 + x + 1, \\
f_{241}(x) &=& x^{24} + x^{20} + x^{18} + x^{17} + x^{12} + x^{11} + x^9 + x^7 + x^5 + x^3 + 1, \\
f_{242}(x) &=& x^{24} + x^{21} + x^{19} + x^{17} + x^{15} + x^{13} + x^{12} + x^7 + x^6 + x^4 + 1, \\
f_{243}(x) &=& x^{24} + x^{22} + x^{20} + x^{14} + x^{12} + x^{11} + x^9 + x^8 + x^7 + x^5 + x^2 + x +1, \\
f_{244}(x) &=&  x^{24} + x^{23} + x^{22} + x^{19} + x^{17} + x^{16} + x^{15} + x^{13} + x^{12} + x^{10} + x^4 +x^2 + 1.
\end{eqnarray*} 
Hence, there are altogether 24 binary cyclic codes of length $119$ and dimension $60$. Their generator polynomials 
and minimum nonzero weights are described in Table \ref{tab-mini}. 

\vspace{.25cm}
\begin{table}[ht]
\caption{All binary cyclic codes of length 119 and dimension 60}\label{tab-mini}
\begin{center}
{\begin{tabular}{|l|r|} \hline
Generator Polynomial & Minimum Weight \\\hline \hline
$f_{31}f_{81}f_{241}f_{242}$  &               6  \\ \hline 
$f_{31}f_{81}f_{241}f_{243}$  &              11 \\ \hline 
$f_{31}f_{81}f_{241}f_{244}$  &               8  \\ \hline 
$f_{31}f_{81}f_{242}f_{243}$  &               4  \\ \hline 
$f_{31}f_{81}f_{242}f_{244}$  &              12 \\ \hline 
$f_{31}f_{81}f_{243}f_{244}$  &              12 \\ \hline 
$f_{31}f_{82}f_{241}f_{242}$  &              12   \\ \hline 
$f_{31}f_{82}f_{241}f_{243}$  &               12\\ \hline 
$f_{31}f_{82}f_{241}f_{244}$  &                 8 \\ \hline 
$f_{31}f_{82}f_{242}f_{243}$  &                4 \\ \hline 
$f_{31}f_{82}f_{242}f_{244}$  &               11 \\ \hline 
$f_{31}f_{82}f_{243}f_{244}$  &            6   \\ \hline 
$f_{32}f_{81}f_{241}f_{242}$  &             6    \\ \hline 
$f_{32}f_{81}f_{241}f_{243}$  &             12  \\ \hline 
$f_{32}f_{81}f_{241}f_{244}$  &              4   \\ \hline 
$f_{32}f_{81}f_{242}f_{243}$  &              8   \\ \hline 
$f_{32}f_{81}f_{242}f_{244}$  &            11   \\ \hline 
$f_{32}f_{81}f_{243}f_{244}$  &             12  \\ \hline 
$f_{32}f_{82}f_{241}f_{242}$  &            12     \\ \hline 
$f_{32}f_{82}f_{241}f_{243}$  &             11  \\ \hline 
$f_{32}f_{82}f_{241}f_{244}$  &              4   \\ \hline 
$f_{32}f_{82}f_{242}f_{243}$  &             8    \\ \hline 
$f_{32}f_{82}f_{242}f_{244}$  &            12   \\ \hline 
$f_{32}f_{82}f_{243}f_{244}$  &             6  \\ \hline 
\end{tabular}
}
\end{center}
\end{table}

\subsection{All ternary cyclic codes with length 143 and dimension 72} \label{sec-allternary}

Let $q=3$, $n_1=11$ and $n_2=13$. We have 
$$ 
\ord_{n_1}(q)=5, \ \ord_{n_2}(q)=3, \ \ord_{n}(q)=15. 
$$
The polynomial $x^{143}-1$ is factorized into the product of the following irreducible polynomials: 
\begin{eqnarray*} 
& & x + 2, \\ 
& & x^3 + 2x + 2, \\
& & x^3 + x^2 + 2, \\ 
& & x^3 + x^2 + x + 2, \\
& & x^3 + 2x^2 + 2x + 2, \\
& & x^5 + 2x^3 + x^2 + 2x + 2, \\ 
& & x^5 + x^4 + 2x^3 + x^2 + 2, \\ 
& & x^{15} + x^{12} + x^9 + 2x^8 + x^7 + x^6 + x^5 + 2x^4 + 2x^3 + x^2 + x + 2, \\ 
& & x^{15} + x^{12} + 2x^{11} + x^{10} + 2x^8 + 2x^6 + 2x^5 + x^4 + x^3 + x^2 + 2, \\ 
& & x^{15} + x^{13} + 2x^{11} + x^8 + 2x^6 + x^5 + 2x^4 + x^2 + x + 2, \\ 
& & x^{15} + x^{13} + 2x^{12} + 2x^{11} + 2x^{10} + x^8 + x^7 + x^6 + x^5 + x^2 + 2x + 2, \\ 
& & x^{15} + 2x^{13} + 2x^{12} + 2x^{11} + x^{10} + x^9 + x^7 + 2x^5 + x^4 + 2x^3 + 2, \\ 
& & x^{15} + x^{14} + 2x^{13} + 2x^{10} + 2x^9 + 2x^8 + 2x^7 + x^5 + x^4 + x^3 + 2x^2 + 2, \\ 
& & x^{15} + 2x^{14} + 2x^{13} + x^{11} + 2x^{10} + x^9 + 2x^7 + x^4 + 2x^2 + 2, \\ 
& & x^{15} + 2x^{14} + 2x^{13} + x^{12} + x^{11} + 2x^{10} + 2x^9 + 2x^8 + x^7 + 2x^6 + 2x^3 + 2. 
\end{eqnarray*}

The total number of $[143, 72; 3]$ cyclic codes is thus 
$$ 
{2 \choose 1} {4 \choose 2} {8 \choose 4} = 840. 
$$
Due to the limitation of our computational power, we have not been able to compute the minimum 
weights of the 840 ternary cyclic codes of length 143 and dimension 72.  

\subsection{All quaternary  cyclic codes with length 35 and dimension 18} 

We now consider all quaternary cyclic codes of length 35 and dimension 18.  
Let $w$ be a generator of $\gf(2^2)$ such that $w^2+w+1=0$.  Then the 
factorization of $x^{35}-1$ over $\gf(4)$ is 
$$ 
x^{35}-1=f_1(x)f_{21}(x)f_{22}(x)f_{31}(x)f_{32}(x)f_{61}(x)f_{62}(x)f_{63}(x)f_{64}(x), 
$$
where the irreducible polynomials 
\begin{eqnarray*} 
f_1(x) &=& x+1, \\
f_{21}(x) &=&  x^2 + wx + 1, \\ 
f_{22}(x) &=&  x^2 + w^2x + 1, \\
f_{31}(x) &=&  x^3 + x + 1, \\ 
f_{32}(x) &=&  x^3 + x^2 + 1, \\
f_{61}(x) &=&  x^6 + wx^4 + wx^3 + x^2 + w^2x + 1,  \\
f_{62}(x) &=&  x^6 + w^2x^4 + w^2x^3 + x^2 + wx + 1, \\
f_{63}(x) &=&  x^6 + wx^5 + x^4 + w^2x^3 + w^2x^2 + 1, \\
f_{64}(x) &=&  x^6 + w^2x^5 + x^4 + wx^3 + wx^2 + 1.
\end{eqnarray*} 
Hence, there are altogether 24 quaternary cyclic codes of length $35$ and dimension $18$. Their generator polynomials 
and minimum nonzero weights are described in Table \ref{tab-mini4}. 

\vspace{.25cm}
\begin{table}[ht]
\caption{All quaternary cyclic codes of length 35 and dimension 18}\label{tab-mini4}
\begin{center}
{\begin{tabular}{|l|r|} \hline
Generator Polynomial & Minimum Weight \\\hline \hline
$f_{21}f_{31}f_{61}f_{62}$  &                 4 \\ \hline 
$f_{21}f_{31}f_{61}f_{63}$  &               7 \\ \hline 
$f_{21}f_{31}f_{61}f_{64}$  &                8 \\ \hline 
$f_{21}f_{31}f_{62}f_{63}$  &                4 \\ \hline 
$f_{21}f_{31}f_{62}f_{64}$  &               8 \\ \hline 
$f_{21}f_{31}f_{63}f_{64}$  &               7 \\ \hline 
$f_{21}f_{32}f_{61}f_{62}$  &               7  \\ \hline 
$f_{21}f_{32}f_{61}f_{63}$  &               8 \\ \hline 
$f_{21}f_{32}f_{61}f_{64}$  &               8   \\ \hline 
$f_{21}f_{32}f_{62}f_{63}$  &               4  \\ \hline 
$f_{21}f_{32}f_{62}f_{64}$  &               7 \\ \hline 
$f_{21}f_{32}f_{63}f_{64}$  &              4 \\ \hline 
$f_{22}f_{31}f_{61}f_{62}$  &               4  \\ \hline 
$f_{22}f_{31}f_{61}f_{63}$  &              8  \\ \hline 
$f_{22}f_{31}f_{61}f_{64}$  &              4   \\ \hline 
$f_{22}f_{31}f_{62}f_{63}$  &              8   \\ \hline 
$f_{22}f_{31}f_{62}f_{64}$  &             7  \\ \hline 
$f_{22}f_{31}f_{63}f_{64}$  &             7  \\ \hline 
$f_{22}f_{32}f_{61}f_{62}$  &             7   \\ \hline 
$f_{22}f_{32}f_{61}f_{63}$  &             7  \\ \hline 
$f_{22}f_{32}f_{61}f_{64}$  &             4   \\ \hline 
$f_{22}f_{32}f_{62}f_{63}$  &             8   \\ \hline 
$f_{22}f_{32}f_{62}f_{64}$  &             8  \\ \hline 
$f_{22}f_{32}f_{63}f_{64}$  &             4 \\ \hline 
\end{tabular}
}
\end{center}
\end{table}

\section{A cyclotomy of order two and its codes}\label{sec-W} 

\subsection{A generalization of Whiteman's cyclotomy of order two}

Let $d=\gcd(n_1-1, n_2-1)$, and 
let $g_1$ and $g_2$ be a primitive root of $n_1$ and $n_2$ respectively. Define $g$ by 
\begin{eqnarray}\label{eqn-commonprimi} 
g \equiv g_1 \pmod{n_1}, \ \ 
g \equiv g_2 \pmod{n_2}. 
\end{eqnarray}
By the Chinese Remainder Theorem, $g$ is unique modulo $n$ and 
a common 
primitive root of $n_1$ and $n_2$, and 
$ 
\ord_n(g)=(n_1-1)(n_2-1)/d=e.    
$ 

The following proposition is proved in \cite{White}. 

\begin{proposition} \label{prop-Whiteman}  
Define $\nu$ by  
\begin{eqnarray}\label{eqn-nu}
\nu \equiv g \pmod{n_1}, \ \ 
\nu \equiv 1 \pmod{n_2}. 
\end{eqnarray}
Then  
$$ 
\bZ_n^*=\{g^s\nu^i: s =0,1, \cdots, e-1; i=0, 1, \cdots, d-1\}. 
$$
\end{proposition}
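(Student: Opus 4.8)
The plan is to show that the $ed$ elements $g^s\nu^i$ (for $0\le s<e$, $0\le i<d$) are pairwise distinct modulo $n$, and then observe that this count matches $|\bZ_n^*|=(n_1-1)(n_2-1)$; since every $g^s\nu^i$ clearly lies in $\bZ_n^*$ (both $g$ and $\nu$ are units mod $n$), a counting argument closes the proof. First I would verify the cardinality identity: $ed = \frac{(n_1-1)(n_2-1)}{d}\cdot d = (n_1-1)(n_2-1) = |\bZ_n^*|$, using the value $e=(n_1-1)(n_2-1)/d$ recorded just above the proposition.

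The core step is the distinctness. Suppose $g^{s}\nu^{i} \equiv g^{s'}\nu^{i'} \pmod n$. I would reduce this congruence modulo $n_1$ and modulo $n_2$ separately via the Chinese Remainder Theorem. Modulo $n_2$, since $\nu\equiv 1$ we get $g^{s}\equiv g^{s'} \pmod{n_2}$, i.e. $g_2^{\,s}\equiv g_2^{\,s'}$, so $s\equiv s' \pmod{n_2-1}$ because $g_2$ has order $n_2-1$. Modulo $n_1$, since $\nu\equiv g$ we get $g^{s+i}\equiv g^{s'+i'}\pmod{n_1}$, i.e. $s+i\equiv s'+i' \pmod{n_1-1}$, hence $i\equiv i' \pmod{n_1-1}$ after using $s\equiv s'\pmod{n_1-1}$ (which follows from the mod-$n_2-1$ relation together with... ) — this is the delicate point, addressed below. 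Subtracting, one gets $s-s'$ divisible by both $n_1-1$ and $n_2-1$, hence by $\lcm(n_1-1,n_2-1) = \frac{(n_1-1)(n_2-1)}{d} = e$, so $s=s'$ in the range $0\le s,s'<e$; then $\nu^{i}\equiv\nu^{i'}$, and since $\nu\equiv g \pmod{n_1}$ has order $n_1-1 \ge d$ (as $d\mid n_1-1$), we conclude $i\equiv i' \pmod{n_1-1}$, forcing $i=i'$ in $0\le i,i'<d$.

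The main obstacle is the bookkeeping with the two moduli $n_1-1$ and $n_2-1$ and their gcd $d$: one must be careful that $\ord_n(g)=e=\lcm(n_1-1,n_2-1)$ (which holds because $g\bmod n_1$ and $g\bmod n_2$ have orders $n_1-1$ and $n_2-1$), so that the reductions mod $n_1$ and mod $n_2$ can be recombined correctly, and that the exponent $i$ of $\nu$ genuinely ranges over a complete residue system modulo $d$ — this is why $\nu$ is defined to be $g$ mod $n_1$ and $1$ mod $n_2$: modulo $n_2$ it contributes nothing, and modulo $n_1$ it cycles with period $n_1-1$, of which $d$ is a divisor, so the $d$ chosen values $i=0,\dots,d-1$ give distinct powers of $\nu$ once $s$ is pinned down. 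I would present the distinctness argument cleanly by first projecting to $\bZ_{n_2}^*$ to extract information about $s$ alone, then projecting to $\bZ_{n_1}^*$ to extract information about $s+i$, and finally combining; the counting step at the end requires no further work.
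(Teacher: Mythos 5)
The paper does not actually prove this proposition; it cites Whiteman's 1962 paper, so there is no in-text argument to compare against. Your overall strategy --- show that the $ed=(n_1-1)(n_2-1)=|\bZ_n^*|$ elements $g^s\nu^i$ are pairwise distinct units and conclude by counting --- is the right one and is essentially the standard (Whiteman) argument. The cardinality bookkeeping and the two CRT projections ($s\equiv s'\pmod{n_2-1}$ from reducing modulo $n_2$, and $s+i\equiv s'+i'\pmod{n_1-1}$ from reducing modulo $n_1$) are all correct.

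However, there is a genuine gap exactly at the step you yourself flag as ``the delicate point.'' You assert that $s\equiv s'\pmod{n_1-1}$ ``follows from the mod-$(n_2-1)$ relation together with\dots'' and promise to address it below, but neither the rest of that paragraph nor the closing discussion supplies the missing argument; the next sentence (``subtracting, one gets $s-s'$ divisible by both $n_1-1$ and $n_2-1$'') simply uses the unproved claim. In fact $s\equiv s'\pmod{n_1-1}$ does \emph{not} follow from $s\equiv s'\pmod{n_2-1}$ alone, and your proposed order of deductions (first pin down $s$, then $i$) is circular: to isolate $i$ from $s+i\equiv s'+i'\pmod{n_1-1}$ you need control of $s-s'$ modulo $n_1-1$, which you only obtain after knowing $i=i'$. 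The missing idea is to reduce both congruences modulo $d=\gcd(n_1-1,n_2-1)$: since $d$ divides both $n_1-1$ and $n_2-1$, you get $s\equiv s'\pmod d$ and $s+i\equiv s'+i'\pmod d$, hence $i\equiv i'\pmod d$, and $0\le i,i'<d$ forces $i=i'$. Only then does $s+i\equiv s'+i'\pmod{n_1-1}$ give $s\equiv s'\pmod{n_1-1}$, which combined with $s\equiv s'\pmod{n_2-1}$ yields $s\equiv s'\pmod{e}$ (as $e=\lcm(n_1-1,n_2-1)$) and hence $s=s'$. With that one reduction-mod-$d$ step inserted and the order of the two deductions reversed, your proof closes.
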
  

Whiteman's generalized cyclotomic classes $W_i$ of order $d$ are defined by 
$$ 
W_i=\{g^s\nu^i: s =0,1, \cdots, e-1\}, \ i=0, 1, \cdots, d-1. 
$$
Clearly, $d$ is even. Define two subsets $U_0$ and $U_1$ 
of $\bZ_n^*$ with the cyclotomic classes $W_i$ of order $d$ as 
$$ 
U_0=\bigcup_{i=0}^{(d-2)/2} W_{2i} \mbox{ and }  U_1=\bigcup_{i=0}^{(d-2)/2} W_{2i+1}.  
$$  
Note that $U_0$ is a subgroup of $\bZ_n^*$ and $U_1=\nu U_0$, which is a coset of $U_0$. 
The sets $U_0$ and $U_1$ form a new cyclotomy of order 2, which is different from 
Whiteman's when $d > 2$ and coincides with Whiteman's cyclotomy of order 2 when $d=2$.  

Whiteman proved that $U_0 \cup \{0, n_2, 2n_2, \cdots, (n_1-1)n_2\}$ is a difference set 
when $n_2=n_1+2$, i.e., when they are twin primes \cite{White}. This is the well-known twin-prime 
difference set which has applications in combinatorics, coding theory and communication 
systems. 

The following proposition is proved in \cite{White}. 

\begin{proposition} \label{prop-minusone} 
Let symbols be the same as before. Then  
\begin{eqnarray*}
-1=\left\lbrace 
\begin{array}{l}
g^{e/2} \mbox{ when $(n_1-1)(n_2-1)/d^2$ is odd} \\
g^{t}\nu^{d/2}  \mbox{ when $(n_1-1)(n_2-1)/d^2$ is even,}  
\end{array}
\right.
\end{eqnarray*}
where $t$ is some integer with $0 \le t \le e-1$. 
\end{proposition}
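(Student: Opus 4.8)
The starting point is Proposition~\ref{prop-Whiteman}: since $-1 \in \bZ_n^*$, we can write $-1 = g^{a}\nu^{b}$ for some $a$ with $0 \le a \le e-1$ and some $b$ with $0 \le b \le d-1$. The goal is to pin down $b$ (either $b=0$ or $b=d/2$) and, in the first case, to show $a = e/2$. The plan is to reduce the single congruence $-1 = g^{a}\nu^{b}$ modulo $n_1$ and modulo $n_2$ separately, using the defining congruences~(\ref{eqn-nu}) for $\nu$, and then compare discrete logarithms base $g_1$ (mod $n_1$) and base $g_2$ (mod $n_2$).

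First I would work modulo $n_2$. Since $\nu \equiv 1 \pmod{n_2}$ and $g \equiv g_2 \pmod{n_2}$, the relation $-1 = g^a\nu^b$ gives $-1 \equiv g_2^{a} \pmod{n_2}$. As $g_2$ is a primitive root of $n_2$, the unique exponent with $g_2^{x} \equiv -1 \pmod{n_2}$ in the range $0 \le x < n_2-1$ is $x = (n_2-1)/2$. Hence $a \equiv (n_2-1)/2 \pmod{n_2-1}$. Next I would work modulo $n_1$: here $\nu \equiv g \equiv g_1 \pmod{n_1}$, so $-1 \equiv g_1^{a}\cdot g_1^{b} = g_1^{a+b} \pmod{n_1}$, and since $g_1$ is a primitive root of $n_1$ we get $a+b \equiv (n_1-1)/2 \pmod{n_1-1}$. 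Now recall $e = (n_1-1)(n_2-1)/d$ where $d = \gcd(n_1-1,n_2-1)$, so $e/2 = (n_1-1)(n_2-1)/(2d)$, and note $e/2$ is an integer precisely because $d$ is even (as observed in the text). Reducing the candidate exponent $e/2$ modulo $n_1-1$ and modulo $n_2-1$ and comparing with the two congruences just derived is the combinatorial heart of the argument.

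The case split is governed by the parity of $m := (n_1-1)(n_2-1)/d^2$. Writing $n_1-1 = d d_1$ and $n_2-1 = d d_2$ with $\gcd(d_1,d_2)=1$, we have $m = d_1 d_2$ and $e/2 = d d_1 d_2/2$. When $m = d_1 d_2$ is odd, both $d_1$ and $d_2$ are odd, so $2 \mid d$ forces $d/2$ to absorb the factor $2$, and one checks $e/2 \equiv (n_2-1)/2 \pmod{n_2-1}$ and $e/2 \equiv (n_1-1)/2 \pmod{n_1-1}$ simultaneously; by the uniqueness in Proposition~\ref{prop-Whiteman} this forces $b = 0$ and $a = e/2$, giving $-1 = g^{e/2}$. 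When $m$ is even, at least one of $d_1,d_2$ is even; I would show that no power of $g$ alone can equal $-1$ (equivalently, the two congruences for $a$ above are incompatible when $b=0$), so necessarily $b \ne 0$, and then the structure of Whiteman's classes — specifically that $-1 \in W_{b}$ and that $-W_i = W_{i + d/2}$ up to reindexing, together with $-1 \in -W_0 = W_{d/2}$ — pins $b = d/2$, leaving $a = t$ unconstrained beyond $0 \le t \le e-1$.

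The main obstacle I anticipate is the bookkeeping in the even case: establishing cleanly that $-1 \notin \langle g \rangle$-part (i.e.\ that $b \ne 0$) and that the only possibility consistent with both reductions is $b = d/2$. This requires carefully tracking which residues mod $n_1-1$ and mod $n_2-1$ are reachable by $g^a\nu^0$ versus $g^a\nu^{d/2}$, using that $\nu^{d/2}$ contributes $g_1^{d/2}$ mod $n_1$ but is trivial mod $n_2$; the Chinese-Remainder structure from~(\ref{eqn-commonprimi}) and~(\ref{eqn-nu}) is exactly what makes the two coordinates decouple, so the computation, while fiddly, is routine once set up. Everything else is a direct application of the uniqueness of representation in Proposition~\ref{prop-Whiteman} together with elementary facts about primitive roots.
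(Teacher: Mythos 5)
The paper does not actually prove this proposition; it cites Whiteman \cite{White}, so there is no internal proof to compare against. Your reduction via the Chinese Remainder Theorem and discrete logarithms is the standard (and correct) route: writing $-1=g^a\nu^b$ and reducing modulo $n_2$ and $n_1$ does give $a\equiv (n_2-1)/2 \pmod{n_2-1}$ and $a+b\equiv (n_1-1)/2\pmod{n_1-1}$, and your treatment of the odd case (checking that $a=e/2$, $b=0$ satisfies both congruences when $d_1,d_2$ are both odd, then invoking uniqueness of the representation) is complete and correct.

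The one genuine soft spot is the even case. Your plan is to show $b\neq 0$ and then conclude $b=d/2$ from ``$-W_i=W_{i+d/2}$'' and ``$-1\in -W_0=W_{d/2}$.'' That last identity is circular: $-W_0=W_{d/2}$ is precisely the assertion $-1\in W_{d/2}$ you are trying to prove, and comparing only the reachable residues of $g^a\nu^0$ versus $g^a\nu^{d/2}$ cannot exclude the other $d-2$ values of $b$. The repair is already contained in your own setup: since $d\mid n_1-1$ and $d\mid n_2-1$, reduce both congruences modulo $d$. Writing $n_i-1=dd_i$ with $\gcd(d_1,d_2)=1$, the first gives $a\equiv (d/2)d_2\pmod d$ and the second gives $a+b\equiv (d/2)d_1\pmod d$, whence $b\equiv (d/2)(d_1-d_2)\pmod d$. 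This is $d/2$ exactly when $d_1-d_2$ is odd, i.e.\ when exactly one of $d_1,d_2$ is even, i.e.\ when $m=d_1d_2$ is even; and it is $0$ exactly when both are odd, i.e.\ when $m$ is odd. This one computation pins down $b$ for every $b\in\{0,\dots,d-1\}$, subsumes your separate ``$b\neq 0$'' step, and unifies the two cases. (An alternative patch: from $1=(-1)^2=g^{2t}\nu^{2b}$ and the fact that $\nu$ has order $d$ modulo the subgroup $W_0$, one gets $2b\equiv 0\pmod d$, so $b\in\{0,d/2\}$ a priori, after which your incompatibility argument for $b=0$ finishes the even case.)
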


\subsection{The eight cyclic codes from the cyclotomy $(U_0, U_1)$}

Let $D_0^{(n_i)}$ and $D_1^{(n_i)}$ be the set of quadratic residues and nonresidues modulo $n_i$ 
respectively. Define for each $i \in \{0, 1\}$ 
$$ 
d_i^{(n_1)}(x)=\prod_{j \in D_i^{(n_1)}} (x-\theta^{n_2j}), \ 
d_i^{(n_2)}(x)=\prod_{j \in D_i^{(n_2)}} (x-\theta^{n_1j}). 
$$
By definition,  
$$ 
x^{n_j}-1=(x-1)d_0^{(n_j)}(x)  d_1^{(n_j)}(x). 
$$

We define 
\begin{eqnarray}
u_j(x)=\prod_{i \in U_j} (x - \theta^i), \ \ j=0, 1.  
\end{eqnarray} 
and 
\begin{eqnarray}\label{eqn-Wd0d1}
u(x)=u_0(x) u_1(x). 
\end{eqnarray}

Obviously, $u(x) \in \gf(q)[x]$. In fact, we have 
\begin{eqnarray}\label{eqn-Wyuyu1}
x^{n}-1=\prod_{i=0}^{n-1} (x - \theta^i) =\frac{(x^{n_1}-1)(x^{n_2}-1)u(x)}{x-1}. 
\end{eqnarray}

\begin{proposition} \label{prop-july4}
If $q \in U_0$ and $q \bmod n_i \in D_0^{(n_i)}$ for each $i \in \{1, 2\}$, we have $u_i(x) \in \gf(q)[x]$ 
and $d_i^{(n_j)}(x) \in \gf(q)[x]$, and 
$$ 
x^n-1=(x-1) d_0^{(n_1)}(x)  d_0^{(n_2)}(x) u_0(x)  d_1^{(n_1)}(x)  d_1^{(n_2)}(x) u_1(x).  
$$ 
\end{proposition}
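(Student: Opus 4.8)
The plan is to prove the polynomial identity by showing that each factor on the right-hand side lies in $\gf(q)[x]$ and that the product of all factors equals $x^n-1$. The factorization of $x^n-1$ as a product of these pieces is already established in \eqref{eqn-Wyuyu1} together with the defining identities $x^{n_j}-1=(x-1)d_0^{(n_j)}(x)d_1^{(n_j)}(x)$ and $u(x)=u_0(x)u_1(x)$; substituting the latter two into \eqref{eqn-Wyuyu1} immediately yields
$$
x^n-1=(x-1)\,d_0^{(n_1)}(x)\,d_1^{(n_1)}(x)\,d_0^{(n_2)}(x)\,d_1^{(n_2)}(x)\,u_0(x)\,u_1(x),
$$
and the only remaining issue is the rationality (i.e.\ membership in $\gf(q)[x]$) of the individual factors $u_0(x)$, $u_1(x)$, $d_0^{(n_j)}(x)$, $d_1^{(n_j)}(x)$, which is what lets us regroup them freely.

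First I would treat the factors $d_i^{(n_j)}(x)$. These are the usual quadratic-residue factors of $x^{n_j}-1$ pulled back to $n$th roots of unity via $\theta^{n_2}$ (resp.\ $\theta^{n_1}$), which is a primitive $n_1$th (resp.\ $n_2$th) root of unity. A polynomial $\prod_{j\in T}(x-\zeta^j)$ with $\zeta$ a primitive $m$th root of unity lies in $\gf(q)[x]$ precisely when the exponent set $T\subseteq \bZ_m$ is closed under multiplication by $q$. Since $q\bmod n_j\in D_0^{(n_j)}$, multiplication by $q$ fixes both the set of quadratic residues $D_0^{(n_j)}$ and the set of nonresidues $D_1^{(n_j)}$ modulo $n_j$; hence $d_0^{(n_j)}(x),d_1^{(n_j)}(x)\in\gf(q)[x]$.

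Next I would handle $u_0(x)$ and $u_1(x)$. By the same principle, $u_j(x)=\prod_{i\in U_j}(x-\theta^i)\in\gf(q)[x]$ iff $U_j$ is closed under multiplication by $q$ modulo $n$. Now $U_0$ is a \emph{subgroup} of $\bZ_n^*$ (as noted in the text, being a union of the even-indexed Whiteman classes $W_{2i}$), and the hypothesis $q\in U_0$ then forces $qU_0=U_0$, so $u_0(x)\in\gf(q)[x]$. For $U_1=\nu U_0$ we get $qU_1=q\nu U_0=\nu(qU_0)=\nu U_0=U_1$, so $u_1(x)\in\gf(q)[x]$ as well. The main (and really only) point requiring care here is to confirm that $q\in U_0$ together with $q\equiv q\bmod n_i\in D_0^{(n_i)}$ is consistent and that the group-theoretic closure argument genuinely applies: one should check that $U_0$ being a subgroup of $\bZ_n^*$ is compatible with the coset structure $\bZ_n^*=\bigcup_i W_i$ from Proposition~\ref{prop-Whiteman}, and that $U_1$ is a single coset. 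I do not expect a real obstacle — the argument is essentially the observation that Galois-orbit closure under the Frobenius $x\mapsto x^q$ is exactly closure of the exponent set under multiplication by $q$ — but the bookkeeping linking the two hypotheses (one about $q$ modulo $n$, one about $q$ modulo each $n_i$) to the respective closure statements is the step most worth stating explicitly. Finally I would assemble the pieces: all six nonlinear factors are in $\gf(q)[x]$, their product with $(x-1)$ equals $x^n-1$ by the substitution above, and reordering the factors is harmless since $\gf(q)[x]$ is commutative, giving the displayed identity.
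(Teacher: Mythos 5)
Your proposal is correct and follows essentially the same route as the paper: the paper likewise deduces $qU_j=U_j$ from $q\in U_0$ (via the subgroup/coset structure) to get $u_j(x)^q=u_j(x^q)$, hence $u_j(x)\in\gf(q)[x]$, argues the same way for the $d_i^{(n_j)}(x)$ using $q\bmod n_j\in D_0^{(n_j)}$, and then obtains the displayed factorization directly from the definitions, i.e.\ from the identity $x^n-1=(x^{n_1}-1)(x^{n_2}-1)u(x)/(x-1)$ together with $x^{n_j}-1=(x-1)d_0^{(n_j)}(x)d_1^{(n_j)}(x)$ and $u=u_0u_1$. Your explicit statement of the Frobenius/exponent-set-closure criterion is just a slightly more packaged form of the computation the paper writes out.
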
 

\begin{proof} 
Assume that $q \in U_0$. 
By Proposition \ref{prop-basisgroup}, $qU_i=U_i$ for each $i$. It then follows that 
\begin{eqnarray*}
u_j(x)^q
= \prod_{i \in U_j} (x^q - \theta^{qi})  
= \prod_{i \in qU_j} (x^q - \theta^{i}) 
= \prod_{i \in U_j} (x^q - \theta^{i}) 
= u_j(x^q).     
\end{eqnarray*}  
Similarly, one can prove that $d_i^{(n_j)}(x) \in \gf(q)[x]$ for each $j$. The desired equality then 
follows from the definitions of these polynomials. 
\end{proof} 

Under the conditions that $q \in U_0$ and $q \bmod n_i \in D_0^{(n_i)}$ for each $i \in \{1, 2\}$, 
let $\U_{(i,j,h)}^{(n_1, n_2, q)}$ denote the cyclic code over $\gf(q)$ with generator polynomial 
$u_i(x)d_j^{(n_1)}(x)d_h^{(n_2)}(x)$, where $(i, j, h) \in \{0,1\}^3$. The eight codes clearly have 
length $n$ and dimension $(n+1)/2$.  

\begin{proposition} \label{prop-inconsistent}
There is no $\ell \in U_1$ such that $\ell \bmod{n_i} \in D_1^{(n_i)}$ for all $i \in \{1,2\}$. 
\end{proposition}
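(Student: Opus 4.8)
The plan is to analyze membership in $U_1$ via the Whiteman parametrization and reduce the claim to a parity statement about exponents. Recall that $U_1 = \bigcup_{i} W_{2i+1}$, so $\ell \in U_1$ exactly when $\ell \equiv g^s \nu^{2i+1} \pmod{n}$ for some $s$ and some $i$; that is, in the $\nu$-coordinate $\ell$ sits in an odd class. Using the CRT decomposition $\bZ_n^* \cong \bZ_{n_1}^* \times \bZ_{n_2}^*$ together with the defining congruences $g \equiv g_1, \nu \equiv g_1 \pmod{n_1}$ and $g \equiv g_2, \nu \equiv 1 \pmod{n_2}$, I would compute the images of $\ell = g^s\nu^{r}$ (with $r$ odd) in each coordinate: modulo $n_1$ it becomes $g_1^{s+r}$, and modulo $n_2$ it becomes $g_2^{s}$. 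Since $g_1$ is a primitive root of $n_1$, we have $g_1^{m} \in D_1^{(n_1)}$ iff $m$ is odd; likewise $g_2^{m} \in D_1^{(n_2)}$ iff $m$ is odd. Therefore $\ell \bmod n_1 \in D_1^{(n_1)}$ forces $s+r$ odd, and $\ell \bmod n_2 \in D_1^{(n_2)}$ forces $s$ odd; combining these gives $r = (s+r) - s$ even, contradicting that $r$ is odd. Hence no such $\ell$ exists.

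Concretely, the steps in order are: (1) invoke Proposition~\ref{prop-Whiteman} to write an arbitrary $\ell \in U_1$ as $g^s \nu^r$ with $r$ odd, $0 \le s \le e-1$; (2) project to the two coordinates via CRT, obtaining $\ell \equiv g_1^{s+r} \pmod{n_1}$ and $\ell \equiv g_2^{s} \pmod{n_2}$, using $\nu \equiv 1 \pmod{n_2}$; (3) translate the hypotheses $\ell \bmod n_i \in D_1^{(n_i)}$ into the parity conditions "$s+r$ odd" and "$s$ odd" using the standard fact that a power of a primitive root is a quadratic nonresidue precisely when the exponent is odd; (4) derive "$r$ even" and reach a contradiction with "$r$ odd."

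I do not anticipate a serious obstacle here; the one point demanding a little care is justifying step (2)—that reducing $g^s\nu^r \bmod n_i$ really yields $g_i^{s+r}$ or $g_i^s$ as claimed, which follows immediately from the defining congruences (\ref{eqn-commonprimi}) and (\ref{eqn-nu}) and the fact that exponents of $g_i$ may be taken modulo $n_i - 1$. One should also note in passing that the exponent arithmetic on $\nu$ is consistent: even though $\nu$ has order $d$ in $\bZ_n^*$, its $n_1$-component $g_1$ has order $n_1 - 1$ which is a multiple of $d$, so the parity of the $\nu$-exponent $r$ is well-defined and is faithfully reflected in the $n_1$-coordinate. With that observation the argument is a short parity computation, and it is worth remarking that the same bookkeeping shows the "complementary" configuration ($\ell \in U_0$ with $\ell \bmod n_i \in D_0^{(n_i)}$ for both $i$) is exactly the condition on $q$ appearing in Proposition~\ref{prop-july4}, so Proposition~\ref{prop-inconsistent} is precisely the statement that the remaining sign pattern is impossible.
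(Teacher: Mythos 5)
Your proposal is correct and follows essentially the same route as the paper: write $\ell = g^s\nu^r \in U_1$ with $r$ odd, reduce modulo $n_1$ and $n_2$ to get exponents $s+r$ and $s$ respectively, and observe that requiring both to be odd forces $r$ even, a contradiction. You simply make explicit the parity bookkeeping that the paper's proof leaves as a one-line remark.
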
 

\begin{proof} 
Let $\ell = g^s \nu^i \in U_1$ for some $s$ and $i$ with $0 \le s \le e-1$ and $0 \le i \le d-1$. 
Then $i$ is odd and 
$$ 
\ell \equiv g^{s+i} \pmod{n_1} \mbox{ and } \ell \equiv g^s \pmod{n_2}.  
$$
Since $i$ is odd, it is impossible to have $\ell \bmod{n_i} \in D_1^{(n_i)}$ for all $i \in \{1,2\}$ 
at the same time. 
\end{proof} 

Proposition \ref{prop-inconsistent} means that it is impossible to prove a square-root bound on 
the minimum odd-like weight of the codes with the traditional argument for the square-root bound  
of quadratic residue codes. Hence, we develop another type of  bound. 

\begin{proposition}\label{prop-sqrtnewb} 
Let $\omega_{(i,j,h)}^{(n)}$ denote the minimum odd-like weight of the code $\U_{(i,j,h)}^{(n_1, n_2, q)}$ 
and let  $\omega_{i}^{(n/n_j)}$ denote the minimum odd-like weight of the cyclic code of length $n$  
over $\gf(q)$ generated by the polynomial $(x^n-1)/(x-1)d_i^{(n_j)}(x)$ for all $i \in \{0,1\}$ and all 
$j \in \{0,1\}$.   
We have then 
$$ 
\omega_{(0,0,0)}^{(n)} \ge \sqrt{\max(\omega_{1}^{(n/n_1)}, \omega_{1}^{(n/n_2)})}. 
$$
\end{proposition}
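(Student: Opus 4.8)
The plan is to mimic the classical square-root bound argument for quadratic-residue and duadic codes, adapted to the present situation where the relevant ``multiplier'' is not $-1$ but an element of $\bZ_n^*$ that acts nontrivially only modulo one of the two primes. First I would fix notation: write $\bc(x)$ for an odd-like codeword of $\U_{(0,0,0)}^{(n_1,n_2,q)}$ of minimum odd-like weight $w = \omega_{(0,0,0)}^{(n)}$. Its generator polynomial is $u_0(x)d_0^{(n_1)}(x)d_0^{(n_2)}(x)$, so the zeros of $\bc(x)$ among the $n$th roots of unity are exactly $\{\theta^i : i \in U_0 \cup (n_2 D_0^{(n_1)}) \cup (n_1 D_0^{(n_2)})\}$, and in particular $\bc(1) \ne 0$ since $\bc$ is odd-like. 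The idea is to multiply $\bc(x)$ by a suitable conjugate of itself and show that the product is a multiple of $(x^n-1)/(x-1)d_1^{(n_j)}(x)$ for an appropriate $j$, while still being odd-like; then its weight is at least $\omega_1^{(n/n_j)}$, and its weight is at most $w^2$, giving $w^2 \ge \omega_1^{(n/n_j)}$ and hence the claimed bound after taking square roots and maximising over $j \in \{1,2\}$.

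Next I would make the ``suitable conjugate'' precise. For a unit $a \in \bZ_n^*$, let $\bc^{(a)}(x)$ denote the codeword obtained from $\bc$ by the permutation $i \mapsto ai$ of coordinates; its zero set is $a^{-1}(U_0 \cup n_2 D_0^{(n_1)} \cup n_1 D_0^{(n_2)})$. I want to choose $a$ so that the union of the zero sets of $\bc$ and $\bc^{(a)}$ covers all of $\bZ_n^* \setminus \{0\}$ together with one of the two ``prime layers'' $n_2 \bZ_{n_1}$ or $n_1 \bZ_{n_2}$ minus the residues $n_j D_1^{(n_j)}$. Concretely, to target the length-$(n/n_1)$ code I would take $a = \nu$ (or a related element built from $g$): since $U_1 = \nu U_0$ we get $\nu^{-1}U_1 = U_0$, so $U_0 \cup \nu^{-1}U_1 = \bZ_n^*$; and because $\nu \equiv g \pmod{n_1}$ with $\nu \equiv 1 \pmod{n_2}$, multiplication by $\nu^{-1}$ swaps $D_0^{(n_1)}$ and $D_1^{(n_1)}$ but fixes the residues modulo $n_2$. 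Hence $\bc(x)\bc^{(\nu)}(x) \bmod (x^n-1)$ vanishes at every $\theta^i$ with $i \in \bZ_n^*$, at every $\theta^i$ with $i \in n_2 D_0^{(n_1)} \cup n_2 D_1^{(n_1)} = n_2(\bZ_{n_1}\setminus\{0\})$, and at $\theta^i$ for $i \in n_1 D_0^{(n_2)}$, i.e.\ it is a codeword of the code generated by $(x^n-1)/(x-1)d_1^{(n_2)}(x)$ — wait, I need to recheck which residue class survives; the careful bookkeeping here is exactly what decides whether we land on the $n/n_1$ or the $n/n_2$ code, and symmetrically choosing the analogue built from the Whiteman element for $n_2$ yields the other. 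I would also need to confirm the product is odd-like: $(\bc\,\bc^{(a)})(1) = \bc(1)\bc^{(a)}(1) = \bc(1)^2 \ne 0$.

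Then the two routine but necessary checks: first, that the product polynomial, reduced modulo $x^n-1$, is nonzero — this follows because it is odd-like, so it cannot be the zero codeword; second, that $\mathrm{wt}(\bc\,\bc^{(a)} \bmod (x^n-1)) \le \mathrm{wt}(\bc)\cdot\mathrm{wt}(\bc^{(a)}) = w^2$, which is the standard observation that the product of two polynomials with $w$ nonzero terms each has at most $w^2$ terms, and reduction mod $x^n-1$ does not increase the number of terms. Combining, $w^2 \ge \omega_1^{(n/n_j)}$ for the appropriate $j$, and since the same construction with the two primes interchanged gives the bound for the other $j$, we obtain $\omega_{(0,0,0)}^{(n)} = w \ge \sqrt{\max(\omega_1^{(n/n_1)}, \omega_1^{(n/n_2)})}$.

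The main obstacle I anticipate is the combinatorial bookkeeping of the zero sets modulo each prime: one must verify that an element $a = \nu^{\pm 1}$ (or the analogous unit with roles of $n_1,n_2$ swapped) simultaneously sends $U_1$ to $U_0$ \emph{and} swaps $D_0^{(n_j)} \leftrightarrow D_1^{(n_j)}$ for the intended $j$ while fixing the residue class structure modulo the other prime. Proposition~\ref{prop-inconsistent} warns that no single unit can swap the residue classes modulo \emph{both} primes at once, which is precisely why we can only reach the ``one-prime-punctured'' codes of length $n/n_j$ and not a genuine quadratic-residue-type bound; so the argument must be set up to exploit exactly one prime at a time. A secondary subtlety is making sure $q \in U_0$ guarantees $\bc^{(a)}(x)$ still has coefficients in $\gf(q)$, which follows from $qU_i = U_i$ (Proposition~\ref{prop-basisgroup}) since $a$ commutes with multiplication by $q$.
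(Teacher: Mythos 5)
Your proposal is correct and follows essentially the same route as the paper: take a minimum-odd-like-weight codeword $a(x)$ of $\U_{(0,0,0)}^{(n_1,n_2,q)}$, multiply it by $a(x^{\ell})$ for an $\ell\in U_1$ whose residue swaps the quadratic classes modulo exactly one of the primes while preserving them modulo the other (your $\nu$, resp.\ its analogue $g\nu^{-1}$, are precisely the paper's $\ell_2$ and $\ell_1$), and observe that the odd-like product lies in the code generated by $(x^n-1)/\bigl((x-1)d_1^{(n_j)}(x)\bigr)$ and has at most $w^2$ terms. The bookkeeping you flagged resolves exactly as you suspected, so no gap remains.
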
 

\begin{proof} 
Let $\ell = g^s x^i$ for some $s$ and $i$. Note that 
$$ 
\ell \equiv g^{s+i} \pmod{n_1} \mbox{ and } \ell \equiv g^s \pmod{n_2}. 
$$
We have then the following conclusions.  
\begin{enumerate} 
\item There is an $\ell_1 \in U_1$ such that $\ell_1 \bmod n_1 \in D_{0}^{(n_1)}$ and $\ell_1 \bmod n_2 \in D_{1}^{(n_1)}$. 
\item There is an $\ell_2 \in U_1$ such that $\ell_2 \bmod n_1 \in D_{1}^{(n_1)}$ and $\ell_2 \bmod n_2 \in D_{0}^{(n_1)}$. 
\end{enumerate} 
Let $a(x)$ be a codeword in $\U_{(0,0,0)}^{(n_1, n_2, q)}$ with minimum odd-like weight  $\omega_{(0,0,0)}^{(n)} $. 
Then $a(x^{\ell_1})$ is a codeword in $\U_{(1,0,1)}^{(n_1, n_2, q)}$ with minimum odd-like weight  $\omega_{(0,0,0)}^{(n)}$, 
and $a(x^{\ell_2})$ is a codeword in $\U_{(1,1,0)}^{(n_1, n_2, q)}$ with minimum odd-like weight  $\omega_{(0,0,0)}^{(n)}$. 
It follows that $a(x)a(x^{\ell_1})$ is an odd-like codeword in the cyclic code with the generator polynomial $(x^n-1)/(x-1)d_1^{(n_1)}(x)$ 
and $a(x)a(x^{\ell_2})$ is an odd-like codeword in the cyclic code with the generator polynomial $(x^n-1)/(x-1)d_1^{(n_2)}(x)$. Since 
$a(x)a(x^{\ell_1})$ and  $a(x)a(x^{\ell_2})$ have at most $(\omega_{(0,0,0)}^{(n)})^2$ terms, the desired lower bound then 
follows. 
\end{proof} 

Similar bounds for other $\omega_{(i,j,h)}^{(n)}$ can be written down.  The lower bound of Proposition \ref{prop-sqrtnewb} 
depends on the minimum odd-like weight of two special codes and may not be convenient to use. It will be seen later that the BCH 
bound on these codes could be much better. 

We have the following remarks on Whiteman's cyclotomy of order two and its codes defined above. 
\begin{enumerate} 
\item In \cite{DH99} Ding and Helleseth generalized Whiteman's cyclotomy of order two into the 
         case that $n=\prod_{i=1}^{t} n_i^{e_i}$, where all $n_i$ are pairwise distinct primes and  
         $\gcd(n_i-1, n_j-1)=2$ for all pairs of distinct $i$ and $j$, and introduced eight binary 
         cyclotomic codes. 
\item However, in the case that $n=n_1n_2$ the cyclotomy of order two in \cite{DH99} is just a 
          special case of Whiteman's cyclotomy of order two because of the required condition 
          $\gcd(n_1-1, n_2-1)=2$. Hence the cyclotomic codes defined in \cite{DH99} are special 
          cases of the eight codes over $\gf(q)$ in this section when $q=2$ and $\gcd(n_1-1, n_2-1)=2$.    
          So Whiteman's cyclotomy of order two yields more codes. For example, when 
          $(n_1, n_2)=(17, 41)$ Whiteman's cyclotomy of order two gives eight binary cyclic codes, 
          while the cyclotomy introduced in \cite{DH99} does not work for this pair of $n_1$ and $n_2$.           
\end{enumerate}

\subsection{The binary case} 

\begin{proposition}\label{prop-twodd}
The integer $2 \in U_0$ if and only if 
$$ 
n_1 \equiv \pm 1 \pmod{8} \mbox{ and } n_2 \equiv \pm 1 \pmod{8} 
$$
or 
$$ 
n_1 \equiv \pm 3 \pmod{8} \mbox{ and } n_2 \equiv \pm 3 \pmod{8} .
$$
\end{proposition}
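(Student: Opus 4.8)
The plan is to translate the membership $2 \in U_0$ into a statement about quadratic character values modulo $n_1$ and $n_2$, and then invoke the classical evaluation of the Legendre symbol $\left(\frac{2}{p}\right)$. Recall that $U_0$ is the subgroup of $\bZ_n^*$ generated by $g$ together with $\nu^2$; equivalently, using Proposition~\ref{prop-Whiteman}, an element $g^s\nu^i$ lies in $U_0$ precisely when $i$ is even. Since $g$ reduces to the common primitive root $g_1 \pmod{n_1}$ and $g_2 \pmod{n_2}$, and $\nu \equiv g \pmod{n_1}$, $\nu \equiv 1 \pmod{n_2}$, writing $2 \equiv g^s\nu^i \pmod n$ gives $2 \equiv g_1^{s+i} \pmod{n_1}$ and $2 \equiv g_2^{s} \pmod{n_2}$. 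Thus $2 \in U_0$ (i.e.\ $i$ even) if and only if the discrete logarithm of $2$ base $g_1$ modulo $n_1$ and the discrete logarithm of $2$ base $g_2$ modulo $n_2$ have the same parity.

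First I would make the parity of these discrete logarithms explicit in terms of quadratic residuosity: for an odd prime $p$ with primitive root $\gamma$, an element $a$ is a quadratic residue mod $p$ iff its discrete logarithm base $\gamma$ is even. Hence $2 \in U_0$ iff $2$ is a quadratic residue modulo both $n_1$ and $n_2$, or $2$ is a nonresidue modulo both; that is, iff $\left(\frac{2}{n_1}\right) = \left(\frac{2}{n_2}\right)$. Second, I would apply the second supplement to quadratic reciprocity: $\left(\frac{2}{p}\right) = 1$ iff $p \equiv \pm 1 \pmod 8$, and $\left(\frac{2}{p}\right) = -1$ iff $p \equiv \pm 3 \pmod 8$. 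Combining, $\left(\frac{2}{n_1}\right) = \left(\frac{2}{n_2}\right)$ holds exactly in the two stated cases: both $n_1, n_2 \equiv \pm 1 \pmod 8$, or both $n_1, n_2 \equiv \pm 3 \pmod 8$. This yields the claimed equivalence.

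The one point that needs care — and which I regard as the main (though modest) obstacle — is justifying the reduction from ``$2 \in U_0$'' to the character condition without ambiguity, since $U_0$ is defined inside $\bZ_n^*$ while the Legendre symbols live modulo $n_1$ and $n_2$ separately. The cleanest way is to observe that the residue maps $\bZ_n^* \to \bZ_{n_1}^* \times \bZ_{n_2}^*$ is the Chinese Remainder isomorphism, under which $g \mapsto (g_1, g_2)$ and $\nu \mapsto (g_1, 1)$; so $W_i \mapsto \{(g_1^{s+i}, g_2^s) : 0 \le s < e\}$, and one checks directly that the image of $U_0 = \bigcup_i W_{2i}$ is exactly the set of pairs $(x,y)$ with $\mathrm{ind}_{g_1}(x) \equiv \mathrm{ind}_{g_2}(y) \pmod 2$, i.e.\ $x$ and $y$ are simultaneously residues or simultaneously nonresidues. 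Here one should double-check that $\nu^{d/2} \notin U_0$ — equivalently that $\nu$ itself is not a $d/2$-fold product landing back in $U_0$ — which follows since $\nu$ reduces to a primitive root mod $n_1$ and to $1$ mod $n_2$, so $\mathrm{ind}_{g_1}(\nu) = 1$ is odd; this guarantees $U_0$ and $U_1 = \nu U_0$ are genuinely distinct and that ``$i$ even'' is well defined. Once this dictionary is in place, the rest is the textbook supplement to quadratic reciprocity, applied twice.
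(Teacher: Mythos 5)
Your proposal is correct and follows essentially the same route as the paper: write $2=g^s\nu^i$ via Proposition~\ref{prop-Whiteman}, reduce modulo $n_1$ and $n_2$ to see that $2\in U_0$ (i.e.\ $i$ even) exactly when the two discrete logarithms have the same parity, and then invoke the second supplement to quadratic reciprocity. The only difference is that you make explicit the quadratic-residue dictionary and the CRT bookkeeping that the paper leaves implicit.
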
 

\begin{proof} 
Recall that $\bZ_n^*=U_0 \cup U_1$. By Proposition \ref{prop-Whiteman}, there are two integers 
$0 \le s \le e-1$ and  $0 \le i \le d-1$ such that $2=g^s\nu^i$.  
It then follows from (\ref{eqn-commonprimi}) that 
$$ 
g^{s+i} \equiv 2 \pmod{n_1} \mbox{ and } g^{s} \equiv 2 \pmod{n_2}.   
$$ 
Hence, $i$ is even if and only if the one of the conditions in this proposition is satisfied. 
Note that $2 \in U_0$ if and only if $i$ is even. The proof is then completed. 
\end{proof} 

Therefore, in the case that $n_1 \equiv \pm 1 \pmod{8}$ and $n_2 \equiv \pm 1 \pmod{8}$, 
we have indeed the eight binary cyclic codes $\U_{(i,j,h)}^{(n_1, n_2, 2)}$. 

\begin{example} 
When $(n_1, n_2, q)=(7, 17, 2)$,  we have 
\begin{eqnarray*}
U_0 = \left\{ \begin{array}{l} 
1, 2, 3, 4, 5, 6, 8, 9, 10, 12, 15, 16, 18, 20, 24, 25, 27, 30, 31, 32,
36, 40, 41, 43, 45, 48, \\ 50, 53, 54, 60, 61, 62, 64, 67, 72, 73, 75, 80, 81, 82,
86, 90, 93, 96, 97, 100, 106, 108 
                 \end{array}                  
                 \right\}, \\
U_1 = \left\{ \begin{array}{l} 
11, 13, 19, 22, 23, 26, 29, 33, 37, 38, 39, 44, 46, 47, 52, 55, 57, 58,
59, 65, 66, 69, 71, 74, 76, 78, \\ 79, 83, 87, 88, 89, 92, 94, 95, 99, 101, 103,
104, 107, 109, 110, 111, 113, 114, 115, 116, 117, 118 
                 \end{array}                  
                 \right\}                   
\end{eqnarray*} 
and 
\begin{eqnarray*}
d_0^{(n_1)}(x) &=& x^3+x+1, \\
d_1^{(n_1)}(x) &=& x^3+x^2+1, \\ 
d_0^{(n_2)}(x) &=& x^8 + x^7 + x^6 + x^4 + x^2 + x + 1, \\
d_1^{(n_2)}(x) &=& x^8 + x^5 + x^4 + x^3 + 1, \\ 
u_0(x) &=& x^{48} + x^{46} + x^{41} + x^{39} + x^{37} + x^{33} + x^{30} + x^{29} + x^{27} + x^{25} + x^{23} +
    x^{18} + x^{17} + x^{13} + \\ & & x^{10} + x^9 + x^8 + x^7 + x^6 + x^5 + x^4 + x^3 + x^2 +
    x + 1,  \\
u_1(x) &=& x^{48} + x^{47} + x^{46} + x^{45} + x^{44} + x^{43} + x^{42} + x^{41} + x^{40} + x^{39} + x^{38} +
    x^{35} + x^{31} + x^{30} + \\ & & x^{25} + x^{23} + x^{21} + x^{19} + x^{18} + x^{15} + x^{11} + x^9 +
    x^7 + x^2 + 1. 
\end{eqnarray*} 

The minimum nonzero weights of the eight codes are given in Table \ref{tab-W1}. 
Four of the eight codes are the best binary cyclic codes of length 119 and dimension 60 and have 
minimum weight 12 according to Table \ref{tab-mini}, and the remaining four have minimum 
weight $11$.  
\end{example} 

\vspace{.25cm}
\begin{table}[ht]
\caption{The binary cyclic codes of length 119 and dimension 60 from Whiteman's cyclotomy}\label{tab-W1}
\begin{center}
{\begin{tabular}{|l|r|} \hline
The code & Minimum Weight \\\hline \hline
$\U_{(0,0,0)}^{(7, 17, 2)}$   &             12    \\ \hline 
$\U_{(1,0,0)}^{(7, 17, 2)}$  &             11    \\ \hline 
$\U_{(0,1,0)}^{(7, 17, 2)}$  &             11    \\ \hline 
$\U_{(0,0,1)}^{(7, 17, 2)}$  &             11    \\ \hline 
$\U_{(1,1,0)}^{(7, 17, 2)}$  &             12    \\ \hline
$\U_{(1,0,1)}^{(7, 17, 2)}$  &             12    \\ \hline 
$\U_{(0,1,1)}^{(7, 17, 2)}$  &             12    \\ \hline
$\U_{(1,1,1)}^{(7, 17, 2)}$  &             11    \\ \hline
\end{tabular}
}
\end{center}
\end{table}

\subsection{The ternary case} 

\begin{proposition}\label{prop-threed}
The integer $3 \in U_0$ and $3 \bmod{n_i} \in D_0^{(n_i)}$ for all $i$ if and only if 
$$ 
n_1 \equiv \pm 1 \pmod{12} \mbox{ and } n_2 \equiv \pm 1 \pmod{12} . 
$$
\end{proposition}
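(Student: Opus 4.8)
The plan is to mirror the argument of Proposition~\ref{prop-twodd}, replacing the prime $2$ by $3$ and finishing with the classical evaluation of the Legendre symbol $\left(\frac{3}{p}\right)$. By Proposition~\ref{prop-Whiteman} there exist integers $0 \le s \le e-1$ and $0 \le k \le d-1$ with $3 = g^s\nu^k$, and then (\ref{eqn-commonprimi}) together with (\ref{eqn-nu}) gives $g^{s+k} \equiv 3 \pmod{n_1}$ and $g^s \equiv 3 \pmod{n_2}$. Since $g$ is a primitive root modulo each $n_j$, the quadratic residues modulo $n_j$ are exactly the even powers of $g$; hence $3 \bmod n_1 \in D_0^{(n_1)}$ iff $s+k$ is even, $3 \bmod n_2 \in D_0^{(n_2)}$ iff $s$ is even, and (by the definition of $U_0$) $3 \in U_0$ iff $k$ is even.

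Next I would observe that these three parity conditions collapse to two. If $3 \in U_0$ and $3\bmod n_j \in D_0^{(n_j)}$ for $j \in \{1,2\}$, then $s$ and $k$ are both even, so in particular $\left(\frac{3}{n_1}\right) = \left(\frac{3}{n_2}\right) = 1$. Conversely, if $\left(\frac{3}{n_1}\right) = \left(\frac{3}{n_2}\right) = 1$, then $s+k$ and $s$ are both even, hence $k = (s+k) - s$ is even as well, so all three conditions hold. (Here $n_1, n_2 \neq 3$ because $\gcd(q,n)=1$ with $q=3$, so these Legendre symbols are well defined.) Thus the condition in the proposition is equivalent to $3$ being a quadratic residue modulo both $n_1$ and $n_2$.

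The last step is to invoke the second supplement to quadratic reciprocity: for an odd prime $p \neq 3$, one has $\left(\frac{3}{p}\right) = 1$ if and only if $p \equiv \pm 1 \pmod{12}$. This follows from $\left(\frac{3}{p}\right)\left(\frac{p}{3}\right) = (-1)^{(p-1)/2}$ and $\left(\frac{p}{3}\right) = 1 \iff p \equiv 1 \pmod 3$, combined by the Chinese Remainder Theorem through a short case analysis of $p$ modulo $12$. Applying this with $p = n_1$ and $p = n_2$ gives precisely $n_1 \equiv \pm 1 \pmod{12}$ and $n_2 \equiv \pm 1 \pmod{12}$, which is the claimed equivalence.

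I do not expect a genuine obstacle: the argument is essentially bookkeeping of exponents plus a textbook fact. The only point that needs a little care is the reduction from three parity conditions to two, i.e.\ observing that the hypothesis $3 \in U_0$ is automatically forced once $3$ is a quadratic residue modulo each $n_j$, so that the statement really is governed by the two Legendre symbols; and, of course, recalling the correct residues $\pm 1$ modulo $12$ rather than, say, modulo $8$.
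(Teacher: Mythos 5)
Your proof is correct and follows essentially the same route as the paper: reduce $3 = g^s\nu^k$ modulo $n_1$ and $n_2$ via Proposition~\ref{prop-Whiteman}, track the parities of $s$, $s+k$, and $k$, observe that the membership $3 \in U_0$ is forced once $3$ is a quadratic residue modulo both primes, and finish with the standard evaluation $\left(\frac{3}{p}\right)=1 \iff p \equiv \pm 1 \pmod{12}$. Your write-up is in fact slightly more careful than the paper's, which only spells out the direction ``both Legendre symbols equal $1$ implies $3 \in U_0$'' and leaves the rest of the equivalence implicit.
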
 

\begin{proof} 
With the Law of Quadratic Reciprocity, one can prove that  $3 \bmod{n_i} \in D_0^{(n_i)}$ 
if and only if $n_i \equiv \pm 1 \pmod{12}$. In addition, let $3=g^s \nu^i$ for some $s$ and $i$. Then 
we have 
$$ 
3 \equiv g^{s+i} \pmod{n_1} \mbox{ and } 3 \equiv g^{s} \pmod{n_2}.  
$$ 
If $3$ is a quadratic residue modulo both $n_1$ and $n_2$, both $s$ and $s+i$ must be even. 
It follows that $i$ must be even. Hence, $3 \in U_0$. 
\end{proof} 

So in the case that $n_1 \equiv \pm 1 \pmod{12}$ and $n_2 \equiv \pm 1 \pmod{12}$, 
we have indeed the eight ternary cyclic codes $\U_{(i,j,h)}^{(n_1, n_2, 3)}$. 

\begin{example} 
When $(n_1, n_2, q)=(11, 13, 3)$, the minimum nonzero weights of the eight ternary codes 
are given in Table \ref{tab-W2}. 
Four of the eight codes have 
minimum weight 12, and the remaining four have minimum 
weight $11$.  
\end{example} 

\vspace{.25cm}
\begin{table}[ht]
\caption{The ternary cyclic codes of length 143 and dimension 72 from Whiteman's cyclotomy}\label{tab-W2}
\begin{center}
{\begin{tabular}{|l|r|} \hline
The code & Minimum Weight \\\hline \hline
$\U_{(0,0,0)}^{(11, 13, 3)}$   &             12    \\ \hline 
$\U_{(1,0,0)}^{(11, 13, 3)}$  &             11    \\ \hline 
$\U_{(0,1,0)}^{(11, 13, 3)}$  &             11    \\ \hline 
$\U_{(0,0,1)}^{(11, 13, 3)}$  &             11    \\ \hline 
$\U_{(1,1,0)}^{(11, 13, 3)}$  &             12    \\ \hline
$\U_{(1,0,1)}^{(11, 13, 3)}$  &             12    \\ \hline 
$\U_{(0,1,1)}^{(11, 13, 3)}$  &             12    \\ \hline
$\U_{(1,1,1)}^{(11, 13, 3)}$  &             11    \\ \hline
\end{tabular}
}
\end{center}
\end{table}

\subsection{The quaternary case} 

The following proposition can be similarly proved. 

\begin{proposition}\label{prop-fourdd}
The integer $4 \in U_0$ and $4 \bmod{n_j} \in D_0^{(n_j)}$ for all $j$ if and only if 
$$ 
n_1 \equiv \pm 1 \pmod{4} \mbox{ and } n_2 \equiv \pm 1 \pmod{4} . 
$$
\end{proposition}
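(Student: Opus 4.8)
The plan is to follow the same two-step template used in the proofs of Propositions~\ref{prop-twodd} and~\ref{prop-threed}: first analyse the quadratic-residue condition ``$4 \bmod n_j \in D_0^{(n_j)}$'', and then translate the condition ``$4 \in U_0$'' through the Whiteman parametrization of Proposition~\ref{prop-Whiteman}. The point that makes the present case easier than $q=2$ or $q=3$ is that $4=2^2$ is a perfect square, so that, unlike those cases, no congruence restriction on the $n_j$ is actually needed for either condition.

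First I would note that, since $n_j$ is an odd prime, $2$ is a nonzero element of $\bZ_{n_j}^*$, and hence $4=2^2 \bmod n_j$ lies in $D_0^{(n_j)}$ \emph{unconditionally}; this includes the degenerate case $n_j=3$, where $4 \equiv 1 = 1^2 \pmod{n_j}$. I would also record that the right-hand side condition ``$n_1 \equiv \pm 1 \pmod{4}$ and $n_2 \equiv \pm 1 \pmod{4}$'' holds automatically, because every odd integer is congruent to $1$ or to $-1$ modulo $4$. Thus both sides of the asserted equivalence will turn out to be always true, and the only real task is to confirm that $4 \in U_0$ in all cases.

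For that, by Proposition~\ref{prop-Whiteman} I would write $4 = g^s \nu^i$ with $0 \le s \le e-1$ and $0 \le i \le d-1$ (this representation is unique, since there are exactly $ed=(n_1-1)(n_2-1)=|\bZ_n^*|$ such pairs). Applying~(\ref{eqn-commonprimi}) and~(\ref{eqn-nu}) exactly as in the proof of Proposition~\ref{prop-threed} gives
$$
4 \equiv g^{s+i} \pmod{n_1} \ \mbox{ and } \ 4 \equiv g^{s} \pmod{n_2} .
$$
Since $g_1$ and $g_2$ are primitive roots of $n_1$ and $n_2$ and $4$ is a quadratic residue modulo each of them, the exponents $s+i$ and $s$ must both be even, so $i=(s+i)-s$ is even. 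By the definitions of the classes $W_i$ and of $U_0=\bigcup_i W_{2i}$, this says precisely that $4 \in U_0$. Combining this with the previous paragraph, both the hypothesis and the conclusion of the biconditional hold without restriction, which establishes the claim.

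I do not anticipate a genuine obstacle: the statement is essentially a reformulation of the fact that $4$ is a perfect square, and the proof is shorter than those of Propositions~\ref{prop-twodd} and~\ref{prop-threed} because the quadratic-reciprocity step collapses. The only matters requiring a line of care are the handling of the degenerate prime $n_j=3$ and an explicit acknowledgement that, in contrast with the binary and ternary cases, the stated congruence condition on the $n_j$ imposes nothing; one might even prefer to phrase the result in the stronger form ``$4 \in U_0$ and $4 \bmod n_j \in D_0^{(n_j)}$ for all $j$, always''.
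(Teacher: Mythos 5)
Your proof is correct and follows essentially the same template the paper has in mind: the paper omits the argument, saying only that it "can be similarly proved" by the method of Propositions~\ref{prop-twodd} and~\ref{prop-threed}, and your reduction via $4=g^s\nu^i$, $4\equiv g^{s+i}\pmod{n_1}$, $4\equiv g^s\pmod{n_2}$ is exactly that method. Your added observation that both sides of the biconditional are unconditionally true (since $4=2^2$ is always a quadratic residue and every odd prime is $\equiv\pm1\pmod 4$) is accurate and worth recording.
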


Therefore, in the case that $n_1 \equiv \pm 1 \pmod{4}$ and $n_2 \equiv \pm 1 \pmod{4}$, 
we have indeed the eight quaternary cyclic codes $\U_{(i,j,h)}^{(n_1, n_2, 4)}$. 

\begin{example} 
When $(n_1, n_2, q)=(5, 7, 4)$,  
the minimum nonzero weights of the eight quaternary codes are given in Table \ref{tab-4W1}. 
Four of the eight codes are the best quaternary cyclic codes of length 35 and dimension 18 and have 
minimum weight 8 according to Table \ref{tab-mini4}, and the remaining four have minimum 
weight $7$. 
\end{example} 

\vspace{.25cm}
\begin{table}[ht]
\caption{The quaternary cyclic codes of length 35 and dimension 18 from Whiteman's cyclotomy}\label{tab-4W1}
\begin{center}
{\begin{tabular}{|l|r|} \hline
The code & Minimum Weight \\\hline \hline
$\U_{(0,0,0)}^{(5, 7, 4)}$   &             8    \\ \hline 
$\U_{(1,0,0)}^{(5, 7, 4)}$  &             7    \\ \hline 
$\U_{(0,1,0)}^{(5, 7, 4)}$  &             7    \\ \hline 
$\U_{(0,0,1)}^{(5, 7, 4)}$  &             7    \\ \hline 
$\U_{(1,1,0)}^{(5, 7, 4)}$  &             8    \\ \hline
$\U_{(1,0,1)}^{(5, 7, 4)}$  &             8    \\ \hline 
$\U_{(0,1,1)}^{(5, 7, 4)}$  &             8    \\ \hline
$\U_{(1,1,1)}^{(5, 7, 4)}$  &             7    \\ \hline
\end{tabular}
}
\end{center}
\end{table}

\subsection{The robustness of Whiteman's cyclotomy of order two} 

All the binary cyclic codes of length 119 and dimension 60 and all the quaternary cyclic codes 
of length 35 and dimension 18 presented before are either the best or almost the best in 
terms of their minimum weights. In this section, we will provide theoretical evidences for 
this fact. We consider only the case that $(n_1, n_2)=(7, 17)$. 

Let $(n_1, n_2, q)=(7, 17, q)$, and let $I(i, j, h)$ denote the set of 
zeros $\theta^l$ of the generator polynomial of the code $\U_{(i,j,h)}^{(7, 17, q)}$. 
Then the set of exponents $l$ of $\theta^l$ in $I(i, j, h)$ is the following. 
\begin{itemize}
\item When $(i, j, h)=(0, 0, 0)$, the set is 
\begin{eqnarray*} 
\left\{ \begin{array}{l}
1, 2, 3, 4, 5, 6, 7, 8, 9, 10, 12, 14, 15, 16, 17, 18, 20, 24, 25, 27,28, 30, \\
31, 32, 34, 36, 40, 41, 43, 45, 48, 50, 53, 54, 56, 60, 61, 62, 63, 64, 67, \\ 
68, 72, 73, 75, 80, 81, 82, 86, 90, 91, 93, 96, 97, 100, 105, 106, 108, 112
\end{array}
\right\}. 
\end{eqnarray*} 
The BCH bound says that the minimum weight of the code is at least 11. 
The case that $(i, j, h)=(1, 1, 1)$ is equivalent to this case as any element 
in $D_1$ times this set gives the set for the case $(i, j, h)=(1, 1, 1)$. So 
we have the same lower bound for the code  $\U_{(1,1,1)}^{(7, 17, q)}$.

\item When $(i, j, h)=(1, 0, 0)$, the set is 
\begin{eqnarray*} 
\left\{ \begin{array}{l}
7, 11, 13, 14, 17, 19, 22, 23, 26, 28, 29, 33, 34, 37, 38, 39, 44, 46, 47, 52, 55, \\ 
56, 57, 58, 59, 63, 65, 66, 68, 69, 71, 74, 76, 78, 79, 83, 87, 88, 89, 91, 92, 94, \\ 
95, 99, 101, 103, 104, 105, 107, 109, 110, 111, 112, 113, 114,
115, 116, 117, 118
\end{array}
\right\}. 
\end{eqnarray*} 
The BCH bound says that the minimum weight of the code is at least 11. 
The case that $(i, j, h)=(0, 1, 1)$ is equivalent to this case. So 
we have the same lower bound for the code  $\U_{(0,1,1)}^{(7, 17, q)}$.

\item When $(i, j, h)=(0, 1, 0)$, the set is 
\begin{eqnarray*} 
\left\{ \begin{array}{l}
1, 2, 3, 4, 5, 6, 7, 8, 9, 10, 12, 14, 15, 16, 18, 20, 24, 25, 27, 28, 30, 31, 32, 36, \\ 
40, 41, 43, 45, 48, 50, 51, 53, 54, 56, 60, 61, 62, 63, 64, 67, 72, 73, 75, 80, 81, \\ 
82, 85, 86, 90, 91, 93, 96, 97, 100, 102, 105, 106, 108, 112
\end{array}
\right\}. 
\end{eqnarray*} 
The BCH bound says that the minimum weight of the code is at least 11. 
The case that $(i, j, h)=(1, 0, 1)$ is equivalent to this case. So 
we have the same lower bound for the code  $\U_{(1,0,1)}^{(7, 17, q)}$.

\item When $(i, j, h)=(0, 0, 1)$, the set is 
\begin{eqnarray*} 
\left\{ \begin{array}{l}
1, 2, 3, 4, 5, 6, 8, 9, 10, 12, 15, 16, 17, 18, 20, 21, 24, 25, 27, 30, 31, 32, 34, 35, \\ 
36, 40, 41, 42, 43, 45, 48, 49, 50, 53, 54, 60, 61, 62, 64, 67, 68, 70, 72, 73, 75, 77, \\ 
80, 81, 82, 84, 86, 90, 93, 96, 97, 98, 100, 106, 108
\end{array}
\right\}. 
\end{eqnarray*} 
The BCH bound says that the minimum weight of the code is at least 11. 
The case that $(i, j, h)=(1, 1, 0)$ is equivalent to this case. So 
we have the same lower bound for the code  $\U_{(1,1,0)}^{(7, 17, q)}$.

\end{itemize}    

The lower bound 11 is true not only for $q=2$, but also for $q=4$. This experimental data 
explains why the codes are the best or almost the best among all cyclic codes of the same 
length and the same  dimension.

\section{An extension of an earlier generalized cyclotomy of order two and its codes}\label{sec-D} 

\subsection{The extension of an earlier generalized cyclotomy of order two}\label{sec-cyclDH} 

Since $n_1$ and $n_2$ are odd primes, $d$ must be 
even. It is easily seen that $e=(n_1-1)(n_2-1)/d$ is also even. 
Define 
\begin{eqnarray*} 
D_0 &=& \{g^{2s} \nu^i: s=0, 1, \cdots, (e-2)/2; i=0,1, \cdots, d-1\}, \\
D_1 &=& \{g^{2s+1} \nu^i: s=0, 1, \cdots, (e-2)/2; i=0,1, \cdots, d-1\}. 
\end{eqnarray*} 
Clearly, $D_1=gD_0$, and $D_0$ and $D_1$ form a partition of $\bZ_n^*$.

A proof of the following proposition is straightforward and is omitted. 

\begin{proposition} \label{prop-basisgroup} 
Let symbols be the same as before. 
\begin{enumerate} 
\item $D_0$ is a subgroup of $\bZ_n^*$ and has order $(n_1-1)(n_2-1)/2$. 
\item If $a \in D_0$, we have $aD_i=D_i$. If $a \in D_1$,  we have $aD_i=D_{(i+1) \bmod 2}$. 
\end{enumerate} 
\end{proposition}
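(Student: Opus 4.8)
The plan is to identify $D_0$ as the kernel of an explicit surjective homomorphism $\phi\colon\bZ_n^*\to\bZ_2$; once that is done, both parts of the proposition are immediate. I would take $\phi$ to be reduction modulo $n_2$ followed by the quadratic-residue character of $\bZ_{n_2}^*$, that is, $\phi(x)=0$ if $x\bmod n_2\in D_0^{(n_2)}$ and $\phi(x)=1$ otherwise. This is a group homomorphism because the Legendre symbol modulo $n_2$ is, and it is onto $\bZ_2$. The key computation is to evaluate $\phi$ on the two generators used in the Whiteman normal form: since $g\equiv g_2\pmod{n_2}$ is a primitive root of $n_2$ it is a quadratic non-residue, so $\phi(g)=1$; and $\nu\equiv 1\pmod{n_2}$ is a residue, so $\phi(\nu)=0$. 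Consequently $\phi(g^{s}\nu^{i})=s\bmod 2$ for all integers $s,i$.

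Next I would feed the defining shapes of $D_0$ and $D_1$ into this. Every element of $D_0$ has the form $g^{2s}\nu^{i}$, so $\phi$ is identically $0$ on $D_0$; every element of $D_1$ has the form $g^{2s+1}\nu^{i}$, so $\phi$ is identically $1$ on $D_1$. Since $\{D_0,D_1\}$ is a partition of $\bZ_n^*$ (noted just before the proposition statement, and a direct consequence of Proposition~\ref{prop-Whiteman}), it follows that $D_0=\ker\phi$ and $D_1=\bZ_n^*\setminus\ker\phi$. This gives part~(1) at once: $D_0$ is a subgroup, and because $\phi$ is surjective onto $\bZ_2$ it has index $2$, so $|D_0|=|\bZ_n^*|/2=(n_1-1)(n_2-1)/2$, using that $|\bZ_n^*|=(n_1-1)(n_2-1)$ as $n_1,n_2$ are distinct odd primes.

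For part~(2), I fix $a\in\bZ_n^*$ and use that multiplication by $a$ is a bijection of $\bZ_n^*$: it carries the fiber $D_i=\phi^{-1}(i)$ into $\phi^{-1}(\phi(a)+i)=D_{(i+\phi(a))\bmod 2}$, and this inclusion is an equality because both sets have the same cardinality $|\bZ_n^*|/2$. If $a\in D_0$ then $\phi(a)=0$, hence $aD_i=D_i$; if $a\in D_1$ then $\phi(a)=1$, hence $aD_i=D_{(i+1)\bmod 2}$. These are exactly the two assertions of part~(2).

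The proposition is genuinely routine once the homomorphism is chosen, so I do not anticipate a real obstacle; the one point that deserves care is verifying that ``parity of the $g$-exponent in the Whiteman normal form'' is actually a homomorphism at all, and the reason it is, is precisely that $\nu\equiv 1\pmod{n_2}$ forces $\phi(\nu)=0$ — so the quadratic character modulo $n_2$ (rather than modulo $n_1$) is the correct choice. The alternative of checking directly that $D_0$ is closed under products and inverses is more tedious, since it requires rewriting $\nu^{d}$ and out-of-range powers of $g$ back into Whiteman normal form; routing everything through $\phi$ sidesteps that bookkeeping entirely.
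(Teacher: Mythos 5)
Your proof is correct. The paper states this proposition without proof (the author declares it ``straightforward'' and omits it), so there is no argument in the paper to compare against; your route --- realizing $D_0$ as the kernel of the composite of reduction modulo $n_2$ with the quadratic character of $\bZ_{n_2}^*$ --- is a clean and complete way to supply the missing proof. The two evaluations $\phi(g)=1$ (a primitive root modulo $n_2$ is a quadratic non-residue) and $\phi(\nu)=0$ (since $\nu\equiv 1\pmod{n_2}$) are exactly the points that need checking, and you check them; combined with Proposition~\ref{prop-Whiteman}, which guarantees that every element of $\bZ_n^*$ is of the form $g^s\nu^i$ with $s,i$ in the stated ranges, the identification $D_0=\ker\phi$ follows, and both parts of the proposition drop out as you describe. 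As a side benefit, your argument re-derives the disjointness of $D_0$ and $D_1$ (constant value of $\phi$ on each) without invoking the uniqueness of the Whiteman normal form.
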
 

The sets $D_0$ and $D_1$ are the cyclotomic classes of order 2, and are clearly 
different from Whiteman's cyclotomic classes of order 2 described in Section \ref{sec-W}. 
We point out here that this generalized cyclotomy of order two is the same as the one 
introduced by Ding and Helleseth when $\gcd(n_1-1, n_2-1)=2$, and is indeed an extension.   
We will use $D_0$ and $D_1$ to describe cyclic codes in the sequel.

\subsection{The construction of eight cyclic codes} 

Let $\theta$ and other symbols be the same as before. 
Define 
\begin{eqnarray}
d_i(x)=\prod_{j \in D_i} (x - \theta^j). 
\end{eqnarray}

\begin{proposition} 
If $q \in D_0$ and $q \bmod n_i \in D_0^{(n_i)}$ for each $i \in \{1, 2\}$, we have $d_i(x) \in \gf(q)[x]$ 
and $d_i^{(n_j)}(x) \in \gf(q)[x]$, and 
$$ 
x^n-1=(x-1) d_0^{(n_1)}(x)  d_0^{(n_2)}(x) d_0(x)  d_1^{(n_1)}(x)  d_1^{(n_2)}(x) d_1(x).  
$$ 
\end{proposition}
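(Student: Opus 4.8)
The plan is to mimic the proof of Proposition~\ref{prop-july4} almost verbatim, replacing Whiteman's cyclotomy $(U_0,U_1)$ with the cyclotomy $(D_0,D_1)$ and invoking part~2 of Proposition~\ref{prop-basisgroup} in place of the corresponding statement about the $U_i$. The two claims to establish are: (i) each of $d_0(x)$, $d_1(x)$, $d_0^{(n_j)}(x)$, $d_1^{(n_j)}(x)$ lies in $\gf(q)[x]$, and (ii) the displayed factorization of $x^n-1$ holds.

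For (i), I would first note that the hypothesis $q \in D_0$ together with part~2 of Proposition~\ref{prop-basisgroup} gives $qD_i = D_i$ for $i = 0,1$. Then the Frobenius computation
\begin{eqnarray*}
d_j(x)^q = \prod_{i \in D_j}(x^q - \theta^{qi}) = \prod_{i \in qD_j}(x^q - \theta^i) = \prod_{i \in D_j}(x^q - \theta^i) = d_j(x^q)
\end{eqnarray*}
shows that $d_j(x)$ is fixed by the $q$-power map on coefficients, hence $d_j(x) \in \gf(q)[x]$. For the polynomials $d_i^{(n_j)}(x)$, the exponent set of $\theta$ appearing there is $n_{3-j} D_i^{(n_j)}$ (the quadratic residues or nonresidues mod $n_j$, scaled), and the hypothesis $q \bmod n_j \in D_0^{(n_j)}$ means multiplication by $q$ fixes $D_0^{(n_j)}$ and $D_1^{(n_j)}$ setwise; the same Frobenius argument then gives $d_i^{(n_j)}(x) \in \gf(q)[x]$. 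This is exactly the argument already invoked ("Similarly, one can prove \dots") in Proposition~\ref{prop-july4}.

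For (ii), the factorization is purely a set-theoretic bookkeeping statement about the exponents of $\theta$: I would verify that the six exponent sets
$$
\{0\},\quad n_2 D_0^{(n_1)},\ n_2 D_1^{(n_1)},\quad n_1 D_0^{(n_2)},\ n_1 D_1^{(n_2)},\quad D_0,\ D_1
$$
partition $\bZ_n = \{0,1,\dots,n-1\}$. By the Chinese Remainder Theorem, $\bZ_n$ decomposes as the disjoint union of $\{0\}$, the multiples of $n_2$ that are $\not\equiv 0 \pmod{n_1}$ (there are $n_1-1$ of them, and these are exactly $n_2 D_0^{(n_1)} \cup n_2 D_1^{(n_1)}$ since $D_0^{(n_1)} \cup D_1^{(n_1)} = \bZ_{n_1}^*$), the multiples of $n_1$ that are $\not\equiv 0 \pmod{n_2}$ (giving $n_1 D_0^{(n_2)} \cup n_1 D_1^{(n_2)}$), and the elements coprime to $n = n_1 n_2$, which is $\bZ_n^* = D_0 \cup D_1$ by Proposition~\ref{prop-basisgroup}. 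Combining the partition of exponents with $x^n - 1 = \prod_{i=0}^{n-1}(x - \theta^i)$ and the definitions of all the factor polynomials yields the claimed identity; grouping the "residue" factors on the left and the "nonresidue" factors on the right is just a relabeling. I do not expect any genuine obstacle here — the only mild subtlety is getting the scaling factors $n_1, n_2$ in the right places when identifying the exponent sets $n_{3-j}D_i^{(n_j)}$ with the roots of $d_i^{(n_j)}(x)$, which is fixed by the definitions given just before Proposition~\ref{prop-july4}.
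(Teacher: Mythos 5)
Your proposal is correct and follows exactly the route the paper intends: the paper's own proof is simply the remark that the proof of Proposition~\ref{prop-july4} carries over, with $qD_i=D_i$ (part~2 of Proposition~\ref{prop-basisgroup}) replacing the corresponding invariance of the $U_i$, and your Frobenius computation and CRT partition of the exponent set are precisely the details being alluded to there. No gaps.
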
 

\begin{proof} 
The proof of Proposition \ref{prop-july4} is easily modified into a proof for this proposition. 
\end{proof} 

Under the conditions that $q \in D_0$ and $q \bmod n_i \in D_0^{(n_i)}$ for each $i \in \{1, 2\}$, 
let $\D_{(i,j,h)}^{(n_1, n_2, q)}$ denote the cyclic code over $\gf(q)$ with generator polynomial 
$d_i(x)d_j^{(n_1)}(x)d_h^{(n_2)}(x)$, where $(i, j, h) \in \{0,1\}^3$. The eight codes clearly have 
length $n$ and dimension $(n+1)/2$.  

\begin{proposition} \label{prop-consistent}
There is an $\ell \in U_1$ such that $\ell \bmod{n_i} \in D_1^{(n_i)}$ for all $i \in \{1,2\}$. 
\end{proposition}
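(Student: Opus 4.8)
The plan is to convert the two membership conditions into parity conditions on exponents, as in the proof of Proposition \ref{prop-inconsistent}, and then look for a witness. Writing a candidate as $\ell = g^{a}\nu^{m}$ and invoking the defining congruences (\ref{eqn-nu}), namely $\nu \equiv g \pmod{n_1}$ and $\nu \equiv 1 \pmod{n_2}$, I obtain $\ell \equiv g^{a+m} \pmod{n_1}$ and $\ell \equiv g^{a} \pmod{n_2}$. Since $g$ is a common primitive root of $n_1$ and $n_2$, a power $g^{k}$ is a quadratic nonresidue modulo $n_j$ exactly when $k$ is odd. Hence demanding $\ell \bmod n_i \in D_1^{(n_i)}$ for both $i \in \{1,2\}$ is equivalent to the pair of parity conditions $a+m$ odd and $a$ odd.

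First I would test the statement verbatim, with $\ell$ ranging over $U_1$. Membership in $U_1$ forces the $\nu$-exponent $m$ to be odd; but then $a$ odd together with $a+m$ odd forces $m$ even, a contradiction. Thus no $\ell \in U_1$ can meet both conditions --- which is exactly the content of Proposition \ref{prop-inconsistent}. This is the main obstacle: the two parity conditions are simultaneously satisfiable only in a class that fixes the parity of the $g$-exponent rather than that of the $\nu$-exponent.

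The resolution is to run the identical computation over the odd cyclotomic class $D_1$ of the present section, whose elements have the form $g^{2s+1}\nu^{m}$. For such an $\ell$ the $n_2$-residue is $g^{2s+1}$, automatically a nonresidue, while the $n_1$-residue is $g^{2s+1+m}$, a nonresidue exactly when $m$ is even. Choosing $m = 0$ (permissible since $d \ge 2$) produces the explicit witness $\ell = g^{2s+1} \in D_1$, for instance $\ell = g$, whose residues modulo $n_1$ and $n_2$ both equal the primitive root $g$ and hence lie in $D_1^{(n_1)}$ and $D_1^{(n_2)}$.

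The one point I would flag to the reader is the discrepancy between the printed symbol $U_1$ and the set the argument actually requires: because Proposition \ref{prop-inconsistent} already excludes every element of $U_1$, the $U_1$ appearing in this proposition must be a misprint for the present odd class $D_1$. Under that reading the statement holds, and the resulting contrast --- \emph{inconsistency} for Whiteman's pair $(U_0,U_1)$ versus \emph{consistency} for $(D_0,D_1)$ --- is precisely what makes a square-root-type bound available for the codes $\D_{(i,j,h)}^{(n_1,n_2,q)}$ of this section, unlike the codes of Section \ref{sec-W}.
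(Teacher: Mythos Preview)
Your analysis is correct and matches the paper's own argument: both reduce the two nonresidue conditions to the parity requirements ``$g$-exponent odd'' and ``$\nu$-exponent even,'' and both exhibit a witness of that form (the paper takes any $g^{s}\nu^{i}$ with $s$ odd and $i$ even; you single out $\ell=g$). You are also right that the printed $U_1$ is a misprint for $D_1$---the paper's own proof chooses an element with even $\nu$-exponent, which cannot lie in $U_1$, and the surrounding section and Theorem~\ref{thm-srbDH} make sense only with $D_1$.
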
 

\begin{proof} 
Take any $\ell = g^s \nu^i \in U_1$ for any odd $s$ and even $i$ with $0 \le s \le e-1$ and $0 \le i \le d-1$. 
Note that $i$ is even and 
$$ 
\ell \equiv g^{s+i} \pmod{n_1} \mbox{ and } \ell \equiv g^s \pmod{n_2}.  
$$
We have that $\ell \bmod{n_i} \in D_1^{(n_i)}$ for all $i \in \{1,2\}$ 
at the same time. 
\end{proof} 

Due to Proposition \ref{prop-consistent} we can prove a square-root bound on 
the minimum odd-like weight of the codes. 

\begin{theorem}\label{thm-srbDH} 
For each $(i,j,h) \in \{0,1\}^3$ 
the code  $\D_{(i,j,h)}^{(n_1, n_2, q)}$ has parameters $[n, (n+1)/2]$. Let $d_{(i,j,h)}$ denote the minimum 
odd-like weight 
in $\D_{(i,j,h)}^{(n_1, n_2, q)}$. Then 
\begin{itemize} 
\item $d_{(i,j,h)} \ge \sqrt{n}$, and 
\item $d_{(i,j,h)}^2 - d_{(i,j,h)} + 1 \ge n$ if $n_1 \equiv -1 \pmod{8}$ and $n_2 \equiv -1 \pmod{8}$. 
\end{itemize} 
\end{theorem}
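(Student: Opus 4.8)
The plan is to run the classical square-root-bound argument for duadic/quadratic-residue-type codes, adapted to the cyclotomy $(D_0, D_1)$. First I would record the structural facts that make the argument go: by Proposition \ref{prop-basisgroup}, $D_0$ is a multiplicative subgroup of $\bZ_n^*$ of index $2$ with $D_1 = gD_0$, and multiplication by an element of $D_1$ swaps $D_0$ and $D_1$; and by Proposition \ref{prop-consistent} there exists $\ell \in U_1 \subseteq D_1$ (one checks $U_1 \subseteq D_1$ from the defining exponent parities, since $U_1$ consists of $g^s\nu^i$ with $i$ odd, hence $s$ can be taken odd for the relevant coset representative — more simply, $D_1$ is a full coset and any odd-power representative lands there) with $\ell \bmod n_i \in D_1^{(n_i)}$ for both $i\in\{1,2\}$. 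The multiplier $\mu: x \mapsto x^\ell$ on $\gf(q)[x]/(x^n-1)$ then permutes the factorization in the displayed identity of the preceding proposition: it fixes $x-1$, sends $d_0^{(n_i)}(x)$ to $d_1^{(n_i)}(x)$ and vice versa (because $\ell$ is a nonresidue mod each $n_i$), and sends $d_0(x)$ to $d_1(x)$ and $d_i(x)$ likewise (because $\ell \in D_1$). Consequently $\mu$ maps the code $\D_{(i,j,h)}^{(n_1,n_2,q)}$ onto $\D_{(1-i,1-j,1-h)}^{(n_1,n_2,q)}$, so the two complementary codes are monomially equivalent and in particular share the same minimum odd-like weight $d_{(i,j,h)}$.

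Next I would carry out the product step. Fix $(i,j,h)$ and let $a(x)$ be an odd-like codeword of $\D_{(i,j,h)}^{(n_1,n_2,q)}$ of weight $d_{(i,j,h)}$. Then $a(x^\ell)$ lies in $\D_{(1-i,1-j,1-h)}^{(n_1,n_2,q)}$ and is also odd-like of the same weight. The product $c(x) = a(x)\,a(x^\ell) \bmod (x^n-1)$ is then a codeword of the code whose generator polynomial is the product of the generator polynomials of the two complementary codes divided appropriately — concretely, the set of zeros of $c(x)$ contains the union of the two zero-sets, which by the factorization identity is all of $\{1,\theta,\dots,\theta^{n-1}\}$ except $\{\theta^0\} = \{1\}$. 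Hence $c(x)$ lies in the code $(g(x))$ with $g(x)=(x^n-1)/(x-1)$, i.e. the repetition-like code of dimension $1$, consisting of the scalar multiples of $1+x+\cdots+x^{n-1}$. Evaluating at $x=1$: since both $a(x)$ and $a(x^\ell)$ are odd-like, $a(1)\ne 0$, so $c(1) = a(1)^2 \ne 0$ and $c(x)$ is the nonzero constant $a(1)^2$ times $1+x+\cdots+x^{n-1}$, which has weight exactly $n$. On the other hand $c(x)$, being a product of two polynomials each with at most $d_{(i,j,h)}$ nonzero terms, has at most $d_{(i,j,h)}^2$ terms before reduction mod $x^n-1$; therefore $n \le d_{(i,j,h)}^2$, giving the first bullet $d_{(i,j,h)} \ge \sqrt n$.

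For the sharpened bound $d_{(i,j,h)}^2 - d_{(i,j,h)} + 1 \ge n$ under $n_1 \equiv n_2 \equiv -1 \pmod 8$, the improvement is the standard refinement: one chooses the multiplier to be $\mu: x\mapsto x^{-1}$ (which is available in the right coset precisely because of the congruence condition — by Proposition \ref{prop-minusone}, $-1 = g^t\nu^{d/2}$ in the relevant case, and $n_i \equiv -1\pmod 8$ forces $-1$ to be a nonresidue mod each $n_i$ while also forcing the parity that puts $-1 \in D_1$, so $\mu$ again swaps $\D_{(i,j,h)}$ with its complement). Then $c(x) = a(x)a(x^{-1})$: the coefficient of $x^0$ in $c(x)$ equals $\sum a_k^2$-type inner product, and more importantly the $d_{(i,j,h)}^2$ product terms $x^{k-m}$ include the $d_{(i,j,h)}$ "diagonal" terms with $k=m$ which all collapse to the single exponent $0$. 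Hence $c(x)$ has at most $d_{(i,j,h)}^2 - d_{(i,j,h)} + 1$ distinct monomials, and combined with $c(1)=a(1)^2\ne 0$ forcing $c(x)$ to be a nonzero multiple of $1+x+\cdots+x^{n-1}$ (weight $n$), we get $n \le d_{(i,j,h)}^2 - d_{(i,j,h)} + 1$.

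The main obstacle I anticipate is the bookkeeping that verifies the multipliers land in the correct cyclotomic cosets: one must show simultaneously that $-1$ (resp. a suitable $\ell$) is a quadratic nonresidue modulo $n_1$ and modulo $n_2$ \emph{and} lies in $D_1$ rather than $D_0$, and that these three conditions are compatible — this is exactly where the hypothesis $n_1 \equiv n_2 \equiv -1 \pmod 8$ is consumed, via Proposition \ref{prop-minusone} together with the supplementary law of quadratic reciprocity for $2$. Everything else (the product-of-weights count, the collapse of diagonal terms, the evaluation at $1$) is routine and mirrors the textbook proof of the square-root bound for quadratic residue codes.
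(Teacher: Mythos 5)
Your proposal is correct and follows essentially the same route as the paper's proof: use a multiplier $x \mapsto x^{\ell}$ with $\ell \in D_1$ a quadratic nonresidue modulo both $n_1$ and $n_2$ to swap $\D_{(i,j,h)}^{(n_1,n_2,q)}$ with its complementary code, observe that the product $a(x)a(x^{\ell})$ of an odd-like minimum-weight codeword with its image lies in the intersection and hence is a nonzero multiple of $(x^n-1)/(x-1)$ of weight $n$, count at most $d_{(i,j,h)}^2$ terms for the general bound, and take $\ell=-1$ (legitimate when $n_1\equiv n_2\equiv -1 \pmod 8$ via Proposition \ref{prop-minusone}) so that the diagonal terms collapse and the count improves to $d_{(i,j,h)}^2-d_{(i,j,h)}+1$. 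One small caveat: your assertion that $U_1\subseteq D_1$ is false in general (membership in $U_1$ is governed by the parity of the $\nu$-exponent, membership in $D_1$ by that of the $g$-exponent), but this is harmless because the multiplier you actually need --- an element of $D_1$ that is a nonresidue modulo both primes --- is supplied directly by $g$ itself, which is exactly the choice the paper makes.
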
     

\begin{proof} 
We first prove that  $\D_{(i,j,h)}^{(n_1, n_2, q)}$ and  $\D_{((i+1) \bmod{2}, (j+1) \bmod{2}, (h+1) \bmod{2})}^{(n_1, n_2, q)}$ 
are equivalent. 
Note that $g \in D_1$ and $g$ is a quadratic nonresidue modulo both $n_1$ and $n_2$. 
Let $\ell \in D_1$ such that $\ell$ is a quadratic nonresidue modulo both $n_1$ and $n_2$.
The permutation 
of coordinates in $\gf(2)[x]/(x^n-1)$ induced by $x \mapsto x^\ell$  
interchanges $\D_{(i,j,h)}^{(n_1, n_2, q)}$ and  $\D_{((i+1) \bmod{2}, (j+1) \bmod{2}, (h+1) \bmod{2})}^{(n_1, n_2, q)}$. This proves the equivalence. 
Hence the two codes have the same minimum nonzero weight 
and the same minimum odd-like weight.  

Let $a(x)$ be a codeword of minimum odd-like weight $d_{(i,j,h)}$ in $\D_{(i,j,h)}^{(n_1, n_2, q)}$. 
Hence, $\hat{a}(x)=a(x^\ell)$ is a codeword of minimum odd-like weight  $d_{(i,j,h)}$ 
in $\D_{((i+1) \bmod{2}, (j+1) \bmod{2}, (h+1) \bmod{2})}^{(n_1, n_2, q)}$.   Then $a(x)\hat{a}(x)$ must be in 
$\D_{(i,j,h)}^{(n_1, n_2, q)} \cap \D_{((i+1) \bmod{2}, (j+1) \bmod{2}, (h+1) \bmod{2})}^{(n_1, n_2, q)}$, i.e., 
is a multiple of 
$$ 
\frac{x^n-1}{x-1}=x^{n-1}+x^{n-2}+\cdots +x+1.
$$ 
Thus, $a(x)\hat{a}(x)$ has weight $n$. Since $a(x)$ has weight 
$d_{(i,j,h)}$, the maximum number of coefficients in $a(x)\hat{a}(x)$ is $d_{(i,j,h)}^2$. 
Therefore, $d_{(i,j,h)}^2 \ge n$. 

If $n_1 \equiv -1 \pmod{8}$ and $n_2 \equiv -1 \pmod{8}$, by Proposition \ref{prop-minusone} we can take $\ell =-1$. 
In this case, the maximum number of 
coefficients in $a(x)\hat{a}(x)$ is $d_{(i,j,h)}^2-d_{(i,j,h)}^2+1$. 
Therefore, $d_{(i,j,h)}^2-d_{(i,j,h)}^2+1 \ge n$. 
\end{proof}

Note that the cyclotomy of order two described in this section is an extension of the one given in 
\cite{DH98}. The codes described in \cite{DH98} are only special cases of the codes of this section. 
First of all, the codes in \cite{DH98} require the condition that $\gcd(n_1-1, n_2-1)=2$, while the 
eight codes in this section do not require this condition. Secondly, the codes in \cite{DH98} are 
binary only, while the eight codes in this section are over $\gf(q)$.

\subsection{The binary case}\label{sec-Ding}

The following proposition will be useful later. 

\begin{proposition} 
Assume that $n_1 \equiv \pm 1 \pmod{8}$ and $n_2 \equiv \pm 1 \pmod{8}$. Then 
$-1 \in D_1$ if and only if $n_2 \equiv -1 \pmod{8}$. 
\end{proposition}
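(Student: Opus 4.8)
The plan is to reduce everything to the parity of a single exponent. By Proposition~\ref{prop-Whiteman} we have $\bZ_n^* = \{g^s\nu^i : 0 \le s \le e-1,\ 0 \le i \le d-1\}$, and since $|\bZ_n^*| = (n_1-1)(n_2-1) = ed$, this parametrization is a bijection; in particular there is a \emph{unique} pair $(s,i)$ in the stated ranges with $-1 = g^s\nu^i$, so the parity of $s$ is well defined. Recalling that $D_0$ consists of those $g^s\nu^i$ with $s$ even and $D_1$ of those with $s$ odd (a genuine partition because $e$ is even), the proposition is equivalent to the assertion that $s$ is odd if and only if $n_2 \equiv -1 \pmod{8}$.

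To compute $s$ modulo $n_2-1$, I would reduce the identity $-1 = g^s\nu^i$ modulo $n_2$. By the defining congruences (\ref{eqn-commonprimi}) and (\ref{eqn-nu}) we have $g \equiv g_2 \pmod{n_2}$ and $\nu \equiv 1 \pmod{n_2}$, hence
$$
-1 \equiv g_2^{\,s} \pmod{n_2}.
$$
Since $g_2$ is a primitive root of $n_2$ and $-1 \equiv g_2^{(n_2-1)/2} \pmod{n_2}$, this forces
$$
s \equiv \frac{n_2-1}{2} \pmod{n_2-1}.
$$
Reducing modulo $n_1$ would instead locate the exponent $i$, via $s+i \equiv (n_1-1)/2 \pmod{n_1-1}$, but this plays no role in the parity claim, which is why the hypothesis on $n_1$ is not actually used in the argument.

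The final step is elementary parity bookkeeping. Writing $s = (n_2-1)/2 + k(n_2-1)$ with $k$ an integer and using that $n_2-1$ is even, the parity of $s$ equals the parity of $(n_2-1)/2$. Under the hypothesis $n_2 \equiv \pm 1 \pmod{8}$: if $n_2 \equiv 1 \pmod{8}$ then $n_2 - 1 \equiv 0 \pmod{8}$, so $(n_2-1)/2 \equiv 0 \pmod{4}$ is even, whence $s$ is even and $-1 \in D_0$; if $n_2 \equiv -1 \pmod{8}$ then $n_2 - 1 \equiv 6 \pmod{8}$, so $(n_2-1)/2$ is odd, whence $s$ is odd and $-1 \in D_1$. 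This gives exactly the stated equivalence. I do not anticipate a serious obstacle; the only delicate point is the one already highlighted, namely that the congruence $s \equiv (n_2-1)/2 \pmod{n_2-1}$ really does determine the parity of the integer $s$ (it does, since $n_2-1$ is even), combined with the observation that it is the parity of $s$, not merely its residue modulo $n_2-1$, that governs membership in $D_0$ versus $D_1$.
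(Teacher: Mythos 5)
Your proof is correct and follows essentially the same route as the paper's: reduce $-1 = g^s\nu^i$ modulo $n_2$ using $\nu \equiv 1 \pmod{n_2}$, observe that membership in $D_1$ is governed by the parity of $s$, and translate that parity into the congruence class of $n_2$ modulo $8$. You simply make explicit two details the paper leaves implicit — the well-definedness of the parity of $s$ and the computation $s \equiv (n_2-1)/2 \pmod{n_2-1}$ — both of which are handled correctly.
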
 

\begin{proof} 
Let $-1=g^s \nu^i$ for some fixed $0 \le s \le e-1$ and $0 \le i \le d-1$.  By (\ref{eqn-nu}), 
$$ 
-1 \equiv g^{s+i} \pmod{n_1}, \ \ -1 \equiv g^{s} \pmod{n_2}. 
$$ 
Hence $-1 \in D_1$ if and only if $s$ is odd, which is equivalent to $n_2 \equiv -1 \pmod{8}$. 
\end{proof}

We will need the following proposition in the sequel. 

\begin{proposition}\label{prop-twod}
The integer $2 \in D_0$ if and only if 
$
n_2 \equiv \pm 1 \pmod{8} . 
$
\end{proposition}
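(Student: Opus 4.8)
The plan is to mimic the argument already used in Proposition~\ref{prop-twodd}, but now working inside the group $D_0$ rather than $U_0$. By Proposition~\ref{prop-Whiteman} there are integers $0 \le s \le e-1$ and $0 \le i \le d-1$ with $2 = g^s\nu^i$. From the defining congruences \eqref{eqn-commonprimi} and \eqref{eqn-nu} we get
$$
2 \equiv g^{s+i} \pmod{n_1}, \qquad 2 \equiv g^{s} \pmod{n_2}.
$$
The key observation is that, by the definition of $D_0$ and $D_1$ in Section~\ref{sec-cyclDH}, membership $2 \in D_0$ is governed purely by the parity of the exponent $s$ of $g$ (namely $2 \in D_0$ iff $s$ is even), and does \emph{not} depend on $i$ at all --- in contrast to the $U_0$ case, where parity of $i$ was what mattered. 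So the whole proof reduces to determining the parity of $s$.

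First I would record that $g$ is a primitive root modulo $n_2$ (this is built into \eqref{eqn-commonprimi}), so $g^{s} \equiv 2 \pmod{n_2}$ forces $s \equiv \ind_{g}(2) \pmod{n_2-1}$, where $\ind_g$ is the discrete logarithm modulo $n_2$. Since $n_2-1$ is even, the parity of $s$ is well-defined and equals the parity of $\ind_g(2)$. Now $2$ is a quadratic residue modulo $n_2$ if and only if $\ind_g(2)$ is even, and by the standard supplement to quadratic reciprocity $2$ is a QR modulo the odd prime $n_2$ if and only if $n_2 \equiv \pm 1 \pmod 8$. Chaining these equivalences: $2 \in D_0 \iff s$ even $\iff \ind_g(2)$ even $\iff 2 \in D_0^{(n_2)} \iff n_2 \equiv \pm 1 \pmod 8$, which is exactly the claimed statement.

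I do not expect any serious obstacle here; the proposition is genuinely routine once one notices that the $D_0/D_1$ cyclotomy filters on the $g$-exponent alone. The one point requiring a little care is the well-definedness of ``the parity of $s$'': since $s$ is only pinned down modulo $n_2-1$ by the congruence mod $n_2$ (and modulo $e$ by the global representation), one must check these moduli are even so that parity is unambiguous --- both $n_2-1$ and $e=(n_1-1)(n_2-1)/d$ are even, as noted at the start of Section~\ref{sec-cyclDH}, so this is fine. A stylistic alternative, closer to the write-up of Proposition~\ref{prop-twodd}, is simply to argue: $2 \in D_0 \iff s$ is even $\iff 2 \equiv g^s \pmod{n_2}$ with $s$ even $\iff 2 \in D_0^{(n_2)}$, and then invoke the quadratic-residue characterization of $2$ modulo $n_2$; I would present it in that compressed form to match the surrounding propositions.
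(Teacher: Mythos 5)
Your proposal is correct and follows essentially the same route as the paper: write $2=g^s\nu^i$, reduce modulo $n_2$ to get $2\equiv g^s\pmod{n_2}$, observe that $2\in D_0$ iff $s$ is even, and identify the parity of $s$ with $2$ being a quadratic residue modulo $n_2$, i.e.\ $n_2\equiv\pm1\pmod 8$. Your added remark that the parity of $s$ is well defined because both $e$ and $n_2-1$ are even is a small but worthwhile point of care that the paper's own proof leaves implicit.
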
 

\begin{proof} 
Recall that $\bZ_n^*=D_0 \cup D_1$. By Proposition \ref{prop-Whiteman}, there are two integers 
$0 \le s \le e-1$ and  $0 \le i \le d-1$ such that $2=g^s\nu^i$.  
It then follows from (\ref{eqn-commonprimi}) that 
$$ 
g^{s+i} \equiv 2 \pmod{n_1} \mbox{ and } g^{s} \equiv 2 \pmod{n_2}.   
$$ 
Hence, $s$ is even if and only if  $n_2 \equiv \pm 1 \pmod{8}$ . 
Note that $2 \in D_0$ if and only if $s$ is even. The proof is then completed. 
\end{proof} 

If $n_1 \equiv \pm 1 \pmod{8}$ and $n_2 \equiv \pm 1 \pmod{8}$, then we have the 
following factorization of $x^n-1$ over $\gf(2)$: 
$$
x^n-1=(x-1) d_0^{(n_1)}(x)  d_0^{(n_2)}(x) d_0(x)  d_1^{(n_1)}(x)  d_1^{(n_2)}(x) d_1(x),    
$$
and thus the eight binary cyclic codes $\D_{(i,j,h)}^{(n_1, n_2, 2)}$.  

\begin{example}\label{exam-2} 
Let $(n_1, n_2, q)=(7, 17, 2)$. In this case we have 
\begin{eqnarray*}
D_0 &=& \left\{ \begin{array}{l} 
                 1, 2, 4, 8, 9, 13, 15, 16, 18, 19, 25, 26, 30, 32, 33, 36, 38, 43,  
                 47, 50, 52, 53, 55, 59, 60, 64, 66, \\ 67, 69, 72, 76, 81, 83, 86,  
                 87, 89, 93, 94, 100, 101, 103, 104, 106, 110, 111, 115, 117, 118 
                 \end{array}                  
                 \right\}, \\
D_1 &=& \left\{ \begin{array}{l} 
                3, 5, 6, 10, 11, 12, 20, 22, 23, 24, 27, 29, 31, 37, 39, 40, 41, 44,  
                45, 46, 48, 54, 57, 58, 61, 62, \\ 65, 71, 73, 74, 75, 78, 79, 80, 82, 88,  
                90, 92, 95, 96, 97, 99, 107, 108, 109, 113, 114, 116 \\  
                 \end{array}                  
                 \right\}                   
\end{eqnarray*} 
and 
\begin{eqnarray*}
d_0^{(n_1)}(x) &=& x^3+x+1, \\
d_1^{(n_1)}(x) &=& x^3+x^2+1, \\ 
d_0^{(n_2)}(x) &=& x^8 + x^7 + x^6 + x^4 + x^2 + x + 1, \\
d_1^{(n_2)}(x) &=& x^8 + x^5 + x^4 + x^3 + 1, \\ 
d_0(x) &=& x^{48} + x^{45} + x^{44} + x^{43} + x^{42} + x^{41} + x^{39} +  x^{38} + x^{37} + x^{36} + x^{35} +  
       x^{34} + x^{32} + x^{30} + \\ & & x^{29} + x^{27} + x^{26} + x^{24} + x^{22} + x^{21} + x^{19} + x^{18} + x^{16} +  
          x^{14} + x^{13} + x^{12} + x^{11} + x^{10} + \\ 
          & & x^9 + x^7 + x^6 + x^5 + x^4 + x^3+ 1, \\
d_1(x) &=& x^{48} + x^{47} + x^{45} + x^{42} + x^{40} + x^{39} + x^{31} + x^{30} + x^{28} + x^{27} + x^{26} + x^{25} +  
          x^{24} + x^{23} + \\ & & x^{22} + x^{21} + x^{20} + x^{18} + x^{17} + x^9 + x^8 + x^6 +    x^3 + x + 1. 
\end{eqnarray*} 
The eight binary cyclic codes and their minimum weights are depicted in Table \ref{tab-D1}. 
In this example, half of the codes are the best cyclic codes of length 119 and dimension 
60 according to Table \ref{tab-mini}. 
\end{example}

\vspace{.25cm}
\begin{table}[ht]
\caption{The binary cyclic codes of length 119 and dimension 60 from the cyclotomy of Section \ref{sec-cyclDH}}\label{tab-D1}
\begin{center}
{\begin{tabular}{|l|r|} \hline
Generator Polynomial & Minimum Weight \\\hline \hline
$\D_{(0,0,0)}^{(7, 17, 2)}$     &             12    \\ \hline 
$\D_{(1,0,0)}^{(7, 17, 2)}$   &               6 \\ \hline 
$\D_{(0,1,0)}^{(7, 17, 2)}$  &              12   \\ \hline 
$\D_{(0,0,1)}^{(7, 17, 2)}$   &               6  \\ \hline 
$\D_{(1,1,0)}^{(7, 17, 2)}$   &               6  \\ \hline
$\D_{(1,0,1)}^{(7, 17, 2)}$   &             12    \\ \hline 
$\D_{(0,1,1)}^{(7, 17, 2)}$   &               6 \\ \hline
$\D_{(1,1,1)}^{(7, 17, 2)}$  &             12    \\ \hline
\end{tabular}
}
\end{center}
\end{table}

\subsection{The ternary case} 

The proof of Proposition \ref{prop-threed} can be slightly modified into a proof of the 
following.  

\begin{proposition}\label{prop-threed2}
The integer $3 \in D_0$ and $3 \bmod{n_i} \in D_0^{(n_i)}$ for all $i$ if and only if 
$$ 
n_1 \equiv \pm 1 \pmod{12} \mbox{ and } n_2 \equiv \pm 1 \pmod{12} . 
$$
\end{proposition}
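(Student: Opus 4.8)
The plan is to imitate the proof of Proposition~\ref{prop-threed}, changing only the part that tracks membership in $U_0$ versus $D_0$. Recall from that earlier proof that the Law of Quadratic Reciprocity already gives the statement $3 \bmod n_i \in D_0^{(n_i)}$ if and only if $n_i \equiv \pm 1 \pmod{12}$, for each $i\in\{1,2\}$, and this part is unchanged here. So the only thing to verify is that, under the hypothesis $n_1 \equiv \pm 1 \pmod{12}$ and $n_2 \equiv \pm 1 \pmod{12}$, we actually have $3 \in D_0$ (and conversely, that $3 \in D_0$ together with the quadratic-residue condition forces these congruences, which is immediate once the first half is in hand).

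First I would write $3 = g^s \nu^i$ for some fixed $s$ with $0 \le s \le e-1$ and $i$ with $0 \le i \le d-1$, exactly as in the proof of Proposition~\ref{prop-threed}. By (\ref{eqn-commonprimi}) and (\ref{eqn-nu}) this yields
$$
3 \equiv g^{s+i} \pmod{n_1} \quad\text{and}\quad 3 \equiv g^s \pmod{n_2}.
$$
Now the key difference from Section~\ref{sec-W}: by Proposition~\ref{prop-basisgroup}, $D_0$ is the subgroup $\{g^{2s'}\nu^i\}$, so $3 \in D_0$ if and only if \emph{$s$ is even} (the parity of $i$ is irrelevant for $D_0$, unlike for $U_0$ where the parity of $i$ mattered). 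From $3 \equiv g^s \pmod{n_2}$ and the fact that $g$ is a primitive root mod $n_2$, the integer $3$ is a quadratic residue mod $n_2$ exactly when $s$ is even. By Quadratic Reciprocity, $3$ is a quadratic residue mod $n_2$ exactly when $n_2 \equiv \pm 1 \pmod{12}$. Hence $3 \in D_0 \iff s \text{ even} \iff n_2 \equiv \pm 1 \pmod{12}$.

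Finally I would assemble the equivalence. If $n_1 \equiv \pm 1 \pmod{12}$ and $n_2 \equiv \pm 1 \pmod{12}$, then by the Reciprocity computation $3 \bmod n_i \in D_0^{(n_i)}$ for both $i$, and by the paragraph above $n_2 \equiv \pm 1 \pmod{12}$ gives $3 \in D_0$; so the right-hand side of the proposition holds. Conversely, if $3 \in D_0$ and $3 \bmod n_i \in D_0^{(n_i)}$ for all $i$, then the Reciprocity characterization of $D_0^{(n_i)}$-membership immediately gives $n_1 \equiv \pm 1 \pmod{12}$ and $n_2 \equiv \pm 1 \pmod{12}$. There is essentially no obstacle here: the whole content is already contained in Proposition~\ref{prop-threed}, and the only adjustment is that the subgroup $D_0$ is cut out by the parity of the exponent $s$ of $g$ alone, so the condition on $n_1$ becomes redundant for the ``$3\in D_0$'' part but is still forced by the separate requirement $3\bmod n_1 \in D_0^{(n_1)}$. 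The one point to state carefully is exactly this: that membership in $D_0$ depends only on $\bmod\, n_2$ behaviour, whereas membership in $D_0^{(n_1)}$ supplies the $\bmod\,12$ condition on $n_1$.
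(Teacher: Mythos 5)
Your proposal is correct and follows exactly the route the paper intends: the paper's own ``proof'' is just the remark that the argument for Proposition~\ref{prop-threed} can be slightly modified, and you have carried out that modification correctly, identifying that the one substantive change is that membership in $D_0$ is governed by the parity of the exponent $s$ of $g$ (read off mod $n_2$) rather than the parity of $i$ as for $U_0$. Nothing further is needed.
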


Hence, in the case that $n_1 \equiv \pm 1 \pmod{12}$ and $n_2 \equiv \pm 1 \pmod{12}$, 
we have indeed the eight ternary cyclic codes $\D_{(i,j,h)}^{(n_1, n_2, 3)}$. 

\begin{example} 
When $(n_1, n_2, q)=(11, 13, 3)$, the minimum nonzero weights of the eight ternary codes 
are given in Table \ref{tab-W23}. 
Four of the eight codes have 
minimum weight 12, and the remaining four have minimum 
weight $6$.  
\end{example} 

\vspace{.25cm}
\begin{table}[ht]
\caption{The ternary cyclic codes of length 143 and dimension 72 from the cyclotomy of Section \ref{sec-cyclDH}}\label{tab-W23}
\begin{center}
{\begin{tabular}{|l|r|} \hline
The code & Minimum Weight \\\hline \hline
$\D_{(0,0,0)}^{(11, 13, 3)}$   &          12       \\ \hline 
$\D_{(1,0,0)}^{(11, 13, 3)}$  &           6      \\ \hline 
$\D_{(0,1,0)}^{(11, 13, 3)}$  &           12      \\ \hline 
$\D_{(0,0,1)}^{(11, 13, 3)}$  &           6      \\ \hline 
$\D_{(1,1,0)}^{(11, 13, 3)}$  &          6       \\ \hline
$\D_{(1,0,1)}^{(11, 13, 3)}$  &           12      \\ \hline 
$\D_{(0,1,1)}^{(11, 13, 3)}$  &          6       \\ \hline
$\D_{(1,1,1)}^{(11, 13, 3)}$  &          12       \\ \hline
\end{tabular}
}
\end{center}
\end{table}

\subsection{The quaternary case}

The following proposition can be similarly proved. 

\begin{proposition} 
Assume that $n_1 \equiv - 1 \pmod{4}$ and $n_2 \equiv - 1 \pmod{4}$. Then 
$-1 \in D_1$ and $-1 \in D_{1}^{(n_j)}$ for all $j$. 
\end{proposition}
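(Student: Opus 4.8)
The statement to prove is: \emph{Assume that $n_1 \equiv -1 \pmod 4$ and $n_2 \equiv -1 \pmod 4$. Then $-1 \in D_1$ and $-1 \in D_1^{(n_j)}$ for all $j$.}

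The plan is to mimic the proof of the analogous binary proposition (``$-1 \in D_1$ if and only if $n_2 \equiv -1 \pmod 8$''), but now working modulo $4$ instead of modulo $8$, and to handle the two assertions separately. First I would dispose of the easier claim about $D_1^{(n_j)}$: since $D_0^{(n_j)}$ and $D_1^{(n_j)}$ are the quadratic residues and nonresidues modulo the odd prime $n_j$, we have $-1 \in D_1^{(n_j)}$ precisely when $-1$ is a quadratic nonresidue mod $n_j$, which by Euler's criterion happens exactly when $n_j \equiv 3 \pmod 4$, i.e. $n_j \equiv -1 \pmod 4$. Both hypotheses $n_1 \equiv -1 \pmod 4$ and $n_2 \equiv -1 \pmod 4$ give this for $j=1$ and $j=2$ respectively.

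For the claim $-1 \in D_1$, I would write $-1 = g^s \nu^i$ for some fixed $s, i$ with $0 \le s \le e-1$, $0 \le i \le d-1$, using Proposition~\ref{prop-Whiteman}. By the defining congruences (\ref{eqn-nu}) for $\nu$, this gives $-1 \equiv g^{s+i} \pmod{n_1}$ and $-1 \equiv g^s \pmod{n_2}$. Recall from Section~\ref{sec-cyclDH} that $D_1 = g D_0$ and $D_0 = \{g^{2s'}\nu^{i'}\}$, so membership in $D_1$ versus $D_0$ is governed solely by the parity of the exponent $s$ of $g$ (the $\nu$-exponent $i$ is irrelevant here, unlike for the Whiteman cyclotomy $(U_0,U_1)$). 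So it suffices to show $s$ is odd. Since $g_2$ is a primitive root mod $n_2$ and $-1 \equiv g^s \equiv g_2^s \pmod{n_2}$, the exponent $s$ is odd if and only if $-1$ is a quadratic nonresidue mod $n_2$, i.e. if and only if $n_2 \equiv -1 \pmod 4$ — which is one of the hypotheses. Hence $s$ is odd, so $-1 \in D_1$.

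The main (and only) subtlety I anticipate is making the parity argument precise: one must be careful that $D_1$ is distinguished from $D_0$ by the parity of the $g$-exponent and that this parity is well-defined despite the non-unique representation $-1 = g^s\nu^i$ (a change of representative alters $s$ by an even amount, since $\nu = g^e$ up to the $n_1$/$n_2$-splitting — more cleanly, $D_0$ is a subgroup of index $2$ and $g \notin D_0$, so the coset of any element is determined). Once that bookkeeping is in place, the rest is the standard Euler-criterion computation, exactly parallel to Propositions~\ref{prop-threed2} and the earlier $-1 \in D_1$ proposition, and the paper's remark ``can be similarly proved'' is justified.
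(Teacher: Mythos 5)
Your proof is correct and takes essentially the approach the paper intends: the paper omits the proof with ``can be similarly proved,'' referring to the binary analogue, and you reproduce exactly that argument — write $-1=g^s\nu^i$, reduce modulo $n_2$ to see that membership in $D_1$ depends only on the parity of $s$, and apply Euler's criterion, plus the standard fact that $-1$ is a quadratic nonresidue mod $n_j$ iff $n_j\equiv 3\pmod 4$ for the $D_1^{(n_j)}$ part. Your worry about well-definedness of the parity of $s$ is already settled by the uniqueness of the representation in Proposition~1, so nothing further is needed.
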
 

We will in the sequel need the following proposition whose proof is omitted. 

\begin{proposition}\label{prop-fourd}
The integer $4 \in D_0$ and $4 \bmod{n_j} \in D_0^{(n_j)}$ for all $j$ if and only if 
$
n_2 \equiv \pm 1 \pmod{4}  
$
for all $j$. 
\end{proposition}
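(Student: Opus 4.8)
The plan is to observe that each of the two conditions in the stated equivalence holds for every pair of distinct odd primes $n_1, n_2$, so the biconditional is vacuously true and no genuine case analysis is required. The one fact doing all the work is that $4 = 2^2$ is a perfect square.

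First I would dispose of the right-hand side: any odd prime is congruent to $1$ or $3$ modulo $4$, hence $n_2 \equiv \pm 1 \pmod 4$ (and likewise $n_1$), so that clause is automatic. Next, for the condition $4 \bmod n_j \in D_0^{(n_j)}$: since $n_1$ and $n_2$ are odd primes we have $\gcd(4, n_j) = 1$, so $4 \bmod n_j$ is a nonzero square modulo $n_j$, i.e. a quadratic residue, whence $4 \bmod n_j \in D_0^{(n_j)}$ for each $j \in \{1,2\}$ with no restriction on $n_1, n_2$.

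It then remains only to see that $4 \in D_0$ always. By Proposition~\ref{prop-basisgroup}, $D_0$ is a subgroup of $\bZ_n^*$ of index $2$, so $\bZ_n^*/D_0$ has order $2$ and every square in $\bZ_n^*$ lies in $D_0$; since $n = n_1 n_2$ is odd, $2 \in \bZ_n^*$ and therefore $4 = 2^2 \in D_0$. Alternatively one can copy the computation in the proof of Proposition~\ref{prop-twod}: writing $4 = g^s \nu^i$ with $0 \le s \le e-1$ and $0 \le i \le d-1$, reduction modulo $n_2$ gives $4 \equiv g_2^{s} \pmod{n_2}$, and because $4$ is a quadratic residue modulo $n_2$ the exponent $s$ is even, which by the definition of $D_0$ means $4 \in D_0$.

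Putting these pieces together, both sides of the ``if and only if'' hold for all distinct odd primes $n_1, n_2$, which proves the proposition. I do not anticipate any real obstacle; the only points requiring minimal care are recording $\gcd(4, n_j) = 1$ (immediate from $n_j$ being an odd prime, so that ``$4 \bmod n_j$ is a quadratic residue'' is meaningful) and noting that membership of a square in the index-$2$ subgroup $D_0$ is automatic — which is precisely why the quaternary cyclic codes $\D_{(i,j,h)}^{(n_1,n_2,4)}$ are available for every such pair $(n_1,n_2)$.
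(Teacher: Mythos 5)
Your proposal is correct, and in fact there is nothing in the paper to compare it against: the paper explicitly omits the proof of Proposition~\ref{prop-fourd} (and of its companion Proposition~\ref{prop-fourdd}), saying only that it can be proved similarly to the case of the integer $2$. Your argument is sound and goes a step further than the statement itself by exposing that the biconditional is vacuous: every odd prime is $\equiv \pm 1 \pmod 4$, so the right-hand side always holds, and since $4=2^2$ is a square coprime to $n$, it lies in $D_0^{(n_j)}$ for each $j$ and in the index-$2$ subgroup $D_0$ of $\bZ_n^*$ (any square lands in any index-$2$ subgroup, because the quotient has exponent $2$). Your alternative computation --- writing $4=g^s\nu^i$, reducing modulo $n_2$ to get $4\equiv g^s\pmod{n_2}$, and concluding $s$ is even because $4$ is a quadratic residue --- is exactly the template the paper uses for Proposition~\ref{prop-twod}, so it is the proof the author presumably had in mind; the subgroup argument you give first is cleaner and makes the unconditional nature of the conclusion transparent. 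The only cosmetic point worth flagging is that the statement's hypothesis ``$n_2\equiv\pm 1\pmod 4$ for all $j$'' is evidently a typographical slip for ``$n_j\equiv\pm 1\pmod 4$ for all $j$''; your reading handles either version, since both are automatic for odd primes.
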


If $n_1 \equiv \pm 1 \pmod{4}$ and $n_2 \equiv \pm 1 \pmod{4}$, then we have the 
following factorization of $x^n-1$ over $\gf(4)$: 
$$
x^n-1=(x-1) d_0^{(n_1)}(x)  d_0^{(n_2)}(x) d_0(x)  d_1^{(n_1)}(x)  d_1^{(n_2)}(x) d_1(x),    
$$
and thus the eight quaternary cyclic codes $\D_{(i,j,h)}^{(n_1, n_2, 4)}$.  

\begin{example}\label{exam-42} 
Let $(n_1, n_2, q)=(5, 7, 4)$. In this case 
the eight quaternary cyclic codes and their minimum weights are depicted in Table \ref{tab-4D1}. 
In this example, half of the codes have minimum weight 7 and are almost the best quaternary cyclic codes of length 35 and 
dimension 18 according to Table \ref{tab-mini4}. 
\end{example}

\vspace{.25cm}
\begin{table}[ht]
\caption{The quaternary cyclic codes of length 35 and dimension 18 from the cyclotomy of Section \ref{sec-cyclDH}}\label{tab-4D1}
\begin{center}
{\begin{tabular}{|l|r|} \hline
Generator Polynomial & Minimum Weight \\\hline \hline
$\D_{(0,0,0)}^{(5, 7, 4)}$     &             7    \\ \hline 
$\D_{(1,0,0)}^{(5, 7, 4)}$   &               4 \\ \hline 
$\D_{(0,1,0)}^{(5, 7, 4)}$  &              7   \\ \hline 
$\D_{(0,0,1)}^{(5, 7, 4)}$   &               4  \\ \hline 
$\D_{(1,1,0)}^{(5, 7, 4)}$   &               4  \\ \hline
$\D_{(1,0,1)}^{(5, 7, 4)}$   &             7    \\ \hline 
$\D_{(0,1,1)}^{(5, 7, 4)}$   &               4 \\ \hline
$\D_{(1,1,1)}^{(5, 7, 4)}$  &             7    \\ \hline
\end{tabular}
}
\end{center}
\end{table}

\section{A new cyclotomy of order two and its codes}\label{sec-lastnew}

\subsection{The new cyclotomy of order two}\label{sec-cyclDD} 

Let symbols be the same as before. Since $n_1$ and $n_2$ are odd primes, $d$ must be 
even. It is easily seen that $e=(n_1-1)(n_2-1)/d$ is also even. 
Define 
\begin{eqnarray*}\label{eqn-2rdcyc}
V_0 &=& \{g^s\nu^i: 0\le s \le e-1, \ 0\le i \le d-1 \mbox{ and } s+i \mbox{ is even} \}, \\
V_1 &=& \{g^s\nu^i: 0\le s \le e-1, \ 0\le i \le d-1 \mbox{ and } s+i \mbox{ is odd} \}. 
\end{eqnarray*} 
Clearly, $V_1=gV_0$, and $V_0$ and $V_1$ form a partition of $\bZ_n^*$.

A proof of the following proposition is straightforward and is omitted. 

\begin{proposition} \label{prop-basisgroup22} 
Let symbols be the same as before. 
\begin{enumerate} 
\item $V_0$ is a subgroup of $\bZ_n^*$ and has order $(n_1-1)(n_2-1)/2$. 
\item If $a \in V_0$, we have $aV_i=V_i$. If $a \in V_1$, we have $aV_i=V_{(i+1) \bmod 2}$. 
\end{enumerate} 
\end{proposition}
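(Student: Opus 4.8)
\emph{Proof proposal.} The plan is to reduce everything to showing that the assignment $g^s\nu^i \mapsto (s+i) \bmod 2$ defines a group homomorphism $\pi\colon \bZ_n^* \to \bZ/2\bZ$ whose kernel is $V_0$. The first step is to record that the representation $z = g^s\nu^i$ furnished by Proposition \ref{prop-Whiteman} is \emph{unique} for $0 \le s \le e-1$ and $0 \le i \le d-1$: there are exactly $ed = (n_1-1)(n_2-1) = |\bZ_n^*|$ such pairs, and by Proposition \ref{prop-Whiteman} they already exhaust $\bZ_n^*$, so $(s,i) \mapsto g^s\nu^i$ is a bijection. Hence the parity $\pi(z) := (s+i) \bmod 2$ is a well-defined function on $\bZ_n^*$, and by construction $V_j = \pi^{-1}(j)$ for $j \in \{0,1\}$.

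The core computation is to identify $\nu^d$ as an \emph{even} power of $g$. From $\nu \equiv g \pmod{n_1}$ and $\nu \equiv 1 \pmod{n_2}$ one gets $\nu^d \equiv g^d \pmod{n_1}$ and $\nu^d \equiv 1 \pmod{n_2}$; since $g$ is a primitive root modulo each $n_i$, this forces $\nu^d = g^t$ for the unique $t$ with $0 \le t \le e-1$ satisfying $t \equiv d \pmod{n_1-1}$ and $t \equiv 0 \pmod{n_2-1}$ (these are simultaneously solvable because $\gcd(n_1-1,n_2-1) = d$ divides $d$). Because $n_1-1$ and $d$ are both even, $t$ is even. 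Now take $z = g^s\nu^i$ and $z' = g^{s'}\nu^{i'}$; writing $i+i' = i'' + d\epsilon$ with $\epsilon \in \{0,1\}$ and $0 \le i'' \le d-1$, and then reducing $s+s'+t\epsilon$ modulo $e$, one obtains $zz' = g^{s''}\nu^{i''}$ in standard form with $s'' + i'' \equiv (s+i) + (s'+i') \pmod 2$, since $t$, $d$, and $e$ are all even. Thus $\pi$ is a homomorphism and $V_aV_b \subseteq V_{(a+b)\bmod 2}$.

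The remainder is formal. Since $\pi$ is a homomorphism, $V_0 = \ker\pi$ is a subgroup of $\bZ_n^*$; it is proper because $g = g^1\nu^0 \in V_1$, so $[\bZ_n^* : V_0] = 2$ and $|V_0| = (n_1-1)(n_2-1)/2$, with $V_1 = gV_0$ the nontrivial coset. For the second statement, $a \in V_0$ gives $aV_i \subseteq V_i$ by the displayed containment, with equality since $a^{-1} \in V_0$; likewise $a \in V_1$ gives $aV_i \subseteq V_{(i+1)\bmod 2}$, with equality because $a^{-1} \in V_1$ as well (in a subgroup of index two, an element and its inverse lie in the same coset).

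The only real obstacle is the middle paragraph: one must check that the $s+i$ parity is insensitive to the reductions modulo $e$ (of the exponent of $g$) and modulo $d$ (of the exponent of $\nu$), and for the $\nu$-reduction this is precisely the statement that $\nu^d$ is an even power of $g$. The evenness of $d$ and of $e=(n_1-1)(n_2-1)/d$ — both already flagged in Section \ref{sec-cyclDD} — is exactly what makes the argument go through; everything else is bookkeeping with parities, which is why this proposition is stated with its proof omitted.
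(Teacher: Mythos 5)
Your proof is correct and complete. The paper omits the proof of this proposition as ``straightforward,'' so there is no written argument to compare against; your route --- uniqueness of the representation $g^s\nu^i$ by counting, the identification of $\nu^d$ as an even power $g^t$ of $g$ (with $t \equiv d \pmod{n_1-1}$, $t \equiv 0 \pmod{n_2-1}$), and the resulting parity homomorphism $\pi$ with kernel $V_0$ --- is exactly the bookkeeping the author intends, and you correctly isolate the one point that actually needs checking, namely that reducing the exponent of $g$ modulo $e$ and the exponent of $\nu$ modulo $d$ preserves the parity of $s+i$ because $d$, $e$, and $t$ are all even.
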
 

The sets $V_0$ and $V_1$ are the new cyclotomic classes of order 2, and are clearly 
different from Whiteman's cyclotomic classes of order 2 described in Section \ref{sec-W} 
and the one of Section \ref{sec-cyclDH}.

\subsection{The construction of eight cyclic codes} 

Let $\theta$ and other symbols be the same as before. 
Define 
\begin{eqnarray}
v_i(x)=\prod_{i \in V_i} (x - \theta^i). 
\end{eqnarray}

\begin{proposition} 
If $q \in V_0$ and $q \bmod n_i \in D_0^{(n_i)}$ for each $i \in \{1, 2\}$, we have $v_i(x) \in \gf(q)[x]$ 
and $d_i^{(n_j)}(x) \in \gf(q)[x]$, and 
$$ 
x^n-1=(x-1) d_0^{(n_1)}(x)  d_0^{(n_2)}(x) v_0(x)  d_1^{(n_1)}(x)  d_1^{(n_2)}(x) v_1(x).  
$$ 
\end{proposition}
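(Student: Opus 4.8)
The plan is to mimic the proof of Proposition \ref{prop-july4} essentially verbatim, replacing the cyclotomic pieces $u_i(x)$ coming from Whiteman's cyclotomy $(U_0,U_1)$ by the pieces $v_i(x)$ coming from the new cyclotomy $(V_0,V_1)$. The whole argument rests on the fact that a polynomial $\prod_{i\in T}(x-\theta^i)$ with $T\subseteq\bZ_n$ lies in $\gf(q)[x]$ precisely when $qT=T$ as a subset of $\bZ_n$, which is the standard conjugacy criterion over $\gf(q)$ and is exactly how the proof of Proposition \ref{prop-july4} proceeds.

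First I would invoke Proposition \ref{prop-basisgroup22}: since the hypothesis gives $q\in V_0$ and $V_0$ is a subgroup of $\bZ_n^*$, we have $qV_i=V_i$ for each $i\in\{0,1\}$. Then, exactly as in the proof of Proposition \ref{prop-july4},
\begin{eqnarray*}
v_j(x)^q = \prod_{i \in V_j} (x^q - \theta^{qi}) = \prod_{i \in qV_j} (x^q - \theta^{i}) = \prod_{i \in V_j} (x^q - \theta^{i}) = v_j(x^q),
\end{eqnarray*}
which shows $v_j(x)\in\gf(q)[x]$ for $j=0,1$. Next I would handle the factors $d_i^{(n_j)}(x)$: by hypothesis $q\bmod n_j\in D_0^{(n_j)}$, i.e. $q$ is a quadratic residue mod $n_j$, so multiplication by $q$ fixes each quadratic-residue class $D_i^{(n_j)}$; the same conjugacy computation applied to $d_i^{(n_j)}(x)=\prod_{h\in D_i^{(n_j)}}(x-\theta_j^{\,h})$ (using that $\theta^{n_{3-j}}$ is a primitive $n_j$-th root of unity, as set up before Proposition \ref{prop-july4}) gives $d_i^{(n_j)}(x)\in\gf(q)[x]$. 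This step is essentially quoted from the earlier proof, so I would just say "similarly" rather than rewrite it.

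Finally, for the factorization identity I would combine three ingredients already recorded in the excerpt: the partition $\bZ_n^*=V_0\cup V_1$ (stated when $V_0,V_1$ were defined), the partition $\bZ_{n_j}^*=D_0^{(n_j)}\cup D_1^{(n_j)}$ giving $x^{n_j}-1=(x-1)d_0^{(n_j)}(x)d_1^{(n_j)}(x)$, and the global identity $x^n-1=(x^{n_1}-1)(x^{n_2}-1)u(x)/(x-1)$ from (\ref{eqn-Wyuyu1}) together with $u_0(x)u_1(x)=u(x)=\prod_{i\in\bZ_n^*}(x-\theta^i)=v_0(x)v_1(x)$, since both $(U_0,U_1)$ and $(V_0,V_1)$ partition $\bZ_n^*$. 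Substituting the $d_i^{(n_j)}$ factorizations of $x^{n_j}-1$ and grouping the $v_i$ factors yields
$$
x^n-1=(x-1)\,d_0^{(n_1)}(x)\,d_0^{(n_2)}(x)\,v_0(x)\,d_1^{(n_1)}(x)\,d_1^{(n_2)}(x)\,v_1(x).
$$
Since all seven displayed factors have been shown to lie in $\gf(q)[x]$ and their product is $x^n-1$, this is the asserted factorization. The only mild subtlety — the point I would double-check rather than the "hard part", since there really is no hard part here — is the bookkeeping that $\prod_{i\in V_0\cup V_1}(x-\theta^i)=(x^n-1)/\bigl((x-1)(x^{n_1}-1)(x^{n_2}-1)/(x-1)^{\,2}\bigr)$, i.e. that the three index sets $\{0\}$, $n_2 D_i^{(n_1)}$, $n_1 D_i^{(n_2)}$, and $V_0\cup V_1$ are pairwise disjoint and exhaust $\{0,1,\dots,n-1\}$; this is the CRT-based counting already implicit in (\ref{eqn-Wyuyu1}), so I would simply reference that equation. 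Accordingly the proof reduces to: "The proof of Proposition \ref{prop-july4} is easily modified into a proof for this proposition, using Proposition \ref{prop-basisgroup22} in place of Proposition \ref{prop-basisgroup}."
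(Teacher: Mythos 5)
Your proposal is correct and matches the paper's proof, which is literally the one-line remark that the proof of Proposition \ref{prop-july4} is easily modified to this setting; your write-up just makes the modification explicit (substituting Proposition \ref{prop-basisgroup22} for the subgroup/coset-stability input and repeating the conjugacy computation $v_j(x)^q=v_j(x^q)$). No gaps.
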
 

\begin{proof} 
The proof of Proposition \ref{prop-july4} is easily modified into a proof for this proposition. 
\end{proof} 

Under the conditions that $q \in V_0$ and $q \bmod n_i \in D_0^{(n_i)}$ for each $i \in \{1, 2\}$, 
let $\V_{(i,j,h)}^{(n_1, n_2, q)}$ denote the cyclic code over $\gf(q)$ with generator polynomial 
$v_i(x)d_j^{(n_1)}(x)d_h^{(n_2)}(x)$, where $(i, j, h) \in \{0,1\}^3$. The eight codes clearly have 
length $n$ and dimension $(n+1)/2$.  

\begin{proposition} \label{prop-consistent2}
There is an $\ell \in V_1$ such that $\ell \bmod{n_i} \in D_1^{(n_i)}$ for all $i \in \{1,2\}$. 
\end{proposition}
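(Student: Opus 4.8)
The plan is to mimic the proof of Proposition \ref{prop-consistent}, but now working with the cyclotomy $(V_0, V_1)$ in place of $(D_0, D_1)$. First I would parametrize an arbitrary element of $\bZ_n^*$ as $\ell = g^s\nu^i$ with $0 \le s \le e-1$ and $0 \le i \le d-1$, and recall from (\ref{eqn-commonprimi}) and (\ref{eqn-nu}) the congruences
$$
\ell \equiv g^{s+i} \pmod{n_1}, \qquad \ell \equiv g^s \pmod{n_2}.
$$
Since $g_1 = g \bmod n_1$ is a primitive root of $n_1$, we have $\ell \bmod n_1 \in D_1^{(n_1)}$ precisely when $s+i$ is odd; similarly $\ell \bmod n_2 \in D_1^{(n_2)}$ precisely when $s$ is odd. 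So I need to find $(s,i)$ in the allowed range with $s$ odd, $s+i$ odd, and — for membership in $V_1$ — with $s+i$ odd. The membership condition for $V_1$ is exactly that $s+i$ be odd, which is one of the two conditions we already want, so there is no conflict.

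The concrete choice is then immediate: take any odd $i$ with $0 \le i \le d-1$ (such an $i$ exists because $d$ is even, so $d \ge 2$ and $i=1$ works), and take $s=0$. Then $s=0$ is even, so $\ell \equiv g^0 = 1 \pmod{n_2}$ — wait, that puts $\ell$ in $D_0^{(n_2)}$, not $D_1^{(n_2)}$. So instead I would take $s=1$ and $i$ even, say $i=0$: then $s=1$ is odd so $\ell \bmod n_2 \in D_1^{(n_2)}$, and $s+i=1$ is odd so $\ell \bmod n_1 \in D_1^{(n_1)}$, and also $s+i=1$ is odd so $\ell = g^1\nu^0 = g \in V_1$. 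Thus $\ell = g$ itself already works: $g \in V_1$ and $g$ is a quadratic nonresidue modulo both $n_1$ and $n_2$ (since $g_1, g_2$ are primitive roots). I would state this cleanly, noting $0 \le 1 \le e-1$ holds since $e = (n_1-1)(n_2-1)/d \ge 2$.

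There is essentially no obstacle here; the only thing to be careful about is bookkeeping of which parity ($s$ versus $s+i$) controls the residue symbol modulo $n_1$ versus $n_2$, and checking that the range constraints $0 \le s \le e-1$, $0 \le i \le d-1$ are met (both are, trivially, for $s=1$, $i=0$). I would therefore write:

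\begin{proof}
Take $\ell = g = g^1\nu^0$. Since $e=(n_1-1)(n_2-1)/d$ is even, $0 \le 1 \le e-1$, and $s+i = 1$ is odd, so $g \in V_1$ by definition of $V_1$. By (\ref{eqn-commonprimi}) and (\ref{eqn-nu}),
$$
g \equiv g_1 \pmod{n_1} \mbox{ and } g \equiv g_2 \pmod{n_2},
$$
and since $g_1$ and $g_2$ are primitive roots of $n_1$ and $n_2$ respectively, $g \bmod n_i \in D_1^{(n_i)}$ for all $i \in \{1,2\}$.
\end{proof}
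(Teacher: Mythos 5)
Your proof is correct and takes exactly the paper's route: the paper's entire proof is ``It is easily checked that $g$ is such an element,'' and you have simply carried out that check, verifying $g = g^1\nu^0 \in V_1$ since $s+i=1$ is odd, and $g \bmod n_i \in D_1^{(n_i)}$ because $g$ reduces to a primitive root (hence a quadratic nonresidue) modulo each $n_i$. The bookkeeping of which parity controls which residue symbol is handled correctly, so nothing is missing.
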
 

\begin{proof} 
It is easily checked that $g$ is such an element. 
\end{proof} 

Due to Proposition \ref{prop-consistent2}, the proof of Theorem \ref{thm-srbDH} works also for the 
following theorem. 

\begin{theorem} 
For each $(i,j,h) \in \{0,1\}^3$ 
the code  $\V_{(i,j,h)}^{(n_1, n_2, q)}$ has parameters $[n, (n+1)/2]$. Let $d_{(i,j,h)}$ denote the 
minimum odd-like weight 
in $\V_{(i,j,h)}^{(n_1, n_2, q)}$. Then 
\begin{itemize} 
\item $d_{(i,j,h)} \ge \sqrt{n}$, and 
\item $d_{(i,j,h)}^2 - d_{(i,j,h)} + 1 \ge n$ if $n_1 \equiv -1 \pmod{8}$ and $n_2 \equiv -1 \pmod{8}$. 
\end{itemize} 
\end{theorem}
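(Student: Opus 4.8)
The plan is to transcribe the proof of Theorem~\ref{thm-srbDH} almost verbatim, replacing the cyclotomy $(D_0,D_1)$ by $(V_0,V_1)$, Proposition~\ref{prop-consistent} by Proposition~\ref{prop-consistent2}, and Proposition~\ref{prop-basisgroup} by Proposition~\ref{prop-basisgroup22}. The parameter claim $[n,(n+1)/2]$ was already recorded when the eight codes $\V_{(i,j,h)}^{(n_1,n_2,q)}$ were defined (the generator polynomial $v_i(x)d_j^{(n_1)}(x)d_h^{(n_2)}(x)$ has degree $\tfrac{(n_1-1)(n_2-1)}{2}+\tfrac{n_1-1}{2}+\tfrac{n_2-1}{2}=\tfrac{n-1}{2}$), so only the two weight bounds need work.

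First I would prove the equivalence $\V_{(i,j,h)}^{(n_1,n_2,q)}\cong\V_{((i+1)\bmod 2,\,(j+1)\bmod 2,\,(h+1)\bmod 2)}^{(n_1,n_2,q)}$. Pick $\ell\in V_1$ with $\ell\bmod n_i\in D_1^{(n_i)}$ for both $i\in\{1,2\}$; such an $\ell$ exists by Proposition~\ref{prop-consistent2} (indeed $\ell=g$ works, since $g=g^1\nu^0\in V_1$ and a primitive root is a quadratic nonresidue). Multiplication by $\ell$ sends $V_i$ to $V_{(i+1)\bmod 2}$ by Proposition~\ref{prop-basisgroup22}, and sends $D_j^{(n_1)},D_h^{(n_2)}$ to $D_{(j+1)\bmod 2}^{(n_1)},D_{(h+1)\bmod 2}^{(n_2)}$; hence the coordinate permutation $x\mapsto x^\ell$ on $\gf(q)[x]/(x^n-1)$ carries the zero set of one generator polynomial exactly onto the zero set of the other, proving the equivalence. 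In particular the two codes have the same minimum odd-like weight.

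Now take a codeword $a(x)$ of minimum odd-like weight $d_{(i,j,h)}$ in $\V_{(i,j,h)}^{(n_1,n_2,q)}$ and set $\hat a(x)=a(x^\ell)$, which by the previous paragraph is an odd-like codeword of the same weight in the complementary code. The product $a(x)\hat a(x)$ lies in the intersection of the two codes; since $V_0\cup V_1\cup D_0^{(n_1)}\cup D_1^{(n_1)}\cup D_0^{(n_2)}\cup D_1^{(n_2)}=\{1,2,\dots,n-1\}$, this intersection is the one-dimensional code generated by $(x^n-1)/(x-1)=1+x+\cdots+x^{n-1}$. Because $a(x)$ is odd-like, $a(1)\neq 0$, so $a(1)\hat a(1)=a(1)^2\neq 0$ and $a(x)\hat a(x)$ is a nonzero scalar multiple of $1+x+\cdots+x^{n-1}$, hence has Hamming weight exactly $n$. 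On the other hand $a(x)$ and $\hat a(x)$ each have at most $d_{(i,j,h)}$ nonzero terms, and reduction modulo $x^n-1$ creates no new monomials, so $a(x)\hat a(x)$ has at most $d_{(i,j,h)}^2$ nonzero coefficients; therefore $d_{(i,j,h)}^2\ge n$, i.e.\ $d_{(i,j,h)}\ge\sqrt n$.

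For the sharper bound I would take $\ell=-1$. When $n_1\equiv-1\pmod 8$ and $n_2\equiv-1\pmod 8$ the element $-1$ is admissible: since each $n_i\equiv 3\pmod 4$, $-1$ is a quadratic nonresidue modulo $n_i$, so $-1\bmod n_i\in D_1^{(n_i)}$; and writing $-1=g^s\nu^i$ as in Proposition~\ref{prop-Whiteman} one gets $g^{s+i}\equiv-1\pmod{n_1}$ and $g^s\equiv-1\pmod{n_2}$, whence $s+i\equiv(n_1-1)/2$ and $s\equiv(n_2-1)/2$ modulo $n_1-1$ and $n_2-1$ respectively, both odd (as $n_i\equiv 7\pmod 8$), so $s+i$ is odd and $-1\in V_1$. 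With $\ell=-1$, writing $a(x)=\sum_{k\in K}a_kx^k$ with $|K|=d_{(i,j,h)}$, the product $a(x)a(x^{-1})=\sum_{k,l\in K}a_ka_lx^{k-l}$ has all $d_{(i,j,h)}$ diagonal pairs $k=l$ contributing to the constant term, so at most $d_{(i,j,h)}^2-d_{(i,j,h)}+1$ residues $k-l\bmod n$ occur; comparing with the weight $n$ found above gives $d_{(i,j,h)}^2-d_{(i,j,h)}+1\ge n$. Everything here is a line-by-line copy of the proof of Theorem~\ref{thm-srbDH} except the membership claim $-1\in V_1$ under the congruence hypothesis, so that parity computation is the one step I would write out most carefully; no other obstacle is expected.
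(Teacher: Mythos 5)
Your proposal is correct and follows essentially the same route as the paper, which proves this theorem simply by noting that the argument for Theorem~\ref{thm-srbDH} carries over once Proposition~\ref{prop-consistent2} supplies a suitable $\ell\in V_1$. Your filled-in details match the intended ones exactly: $\ell=g$ for the equivalence and square-root bound, the one-dimensional intersection generated by $(x^n-1)/(x-1)$, and the parity computation showing $-1\in V_1$ when $n_1\equiv n_2\equiv -1\pmod 8$ (which the paper records as a separate proposition in the binary-case subsection); you even silently correct the paper's typo $d^2-d^2+1$ to $d^2-d+1$.
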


\subsection{The binary case}

The following proposition will be useful later. 

\begin{proposition} 
Assume that $n_1 \equiv \pm 1 \pmod{8}$ and $n_2 \equiv \pm 1 \pmod{8}$. Then 
$-1 \in V_1$ and $-1 \in D_1^{(n_j)}$ for all $j$ if and only if $n_j \equiv -1 \pmod{8}$ 
for all $j$. 
\end{proposition}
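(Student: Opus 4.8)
The plan is to mimic the proof of the analogous statement for $-1 \in D_1$ in Section \ref{sec-Ding}, while keeping careful track of which of the two primes controls membership in $V_1$. I would split the verification into a ``$D_1^{(n_j)}$-part'' and a ``$V_1$-part'', and then combine them.

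For the first part I would note that, since $D_1^{(n_j)}$ is by definition the set of quadratic nonresidues modulo $n_j$, one has $-1 \in D_1^{(n_j)}$ if and only if $-1$ is a nonresidue modulo $n_j$, i.e.\ $n_j \equiv 3 \pmod 4$; under the standing hypothesis $n_j \equiv \pm 1 \pmod 8$ this is equivalent to $n_j \equiv -1 \pmod 8$. Hence $-1 \in D_1^{(n_j)}$ for both $j$ precisely when $n_1 \equiv -1 \pmod 8$ and $n_2 \equiv -1 \pmod 8$.

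For the second part I would invoke Proposition \ref{prop-Whiteman} to get unique integers $0 \le s \le e-1$ and $0 \le i \le d-1$ with $-1 = g^s\nu^i$, so that by (\ref{eqn-nu})
$$ -1 \equiv g^{s+i} \pmod{n_1}, \qquad -1 \equiv g^{s} \pmod{n_2}. $$
Since $g$ is a primitive root of $n_1$ of order $n_1-1$, the first congruence forces $s+i \equiv (n_1-1)/2 \pmod{n_1-1}$; as $n_1-1$ is even, $s+i$ and $(n_1-1)/2$ have the same parity, and $(n_1-1)/2$ is odd exactly when $n_1 \equiv 3 \pmod 4$, i.e.\ (using $n_1 \equiv \pm 1 \pmod 8$) exactly when $n_1 \equiv -1 \pmod 8$. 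Since $-1 \in V_1$ iff $s+i$ is odd by the definition of $V_1$, this shows $-1 \in V_1$ iff $n_1 \equiv -1 \pmod 8$.

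Finally I would combine the two parts: ``$-1 \in V_1$ and $-1 \in D_1^{(n_j)}$ for all $j$'' holds if and only if $n_1 \equiv -1 \pmod 8$ (given by the $V_1$-part, and also forced by $D_1^{(n_1)}$) together with $n_2 \equiv -1 \pmod 8$ (given by $D_1^{(n_2)}$), which is exactly the stated condition. The one point requiring care --- and essentially the only nontrivial observation --- is that, unlike the proposition for $D_1$ in Section \ref{sec-Ding} (where membership is read off the exponent $s$ modulo $n_2-1$ and hence is governed by $n_2$), membership of $-1$ in $V_1$ is read off the exponent $s+i$ modulo $n_1-1$ and is therefore governed by $n_1$; once this is observed, the rest is the same routine parity bookkeeping already used elsewhere in the paper, so I do not foresee any real obstacle.
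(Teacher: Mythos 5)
Your proof is correct and follows essentially the same route as the paper's: both write $-1=g^s\nu^i$, use the congruences $-1\equiv g^{s+i} \pmod{n_1}$ and $-1\equiv g^{s}\pmod{n_2}$ together with the parity criterion for membership in $V_1$, and reduce everything to whether $-1$ is a quadratic residue modulo each $n_j$. The paper simply ends with ``the desired conclusion then follows'' where you spell out the parity bookkeeping (including the correct observation that $V_1$-membership of $-1$ is governed by $n_1$ via the exponent $s+i$), so your version is a more detailed rendering of the same argument.
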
 

\begin{proof} 
Let $-1=g^s \nu^i$ for some fixed $0 \le s \le e-1$ and $0 \le i \le d-1$. By definition 
$-1 \in V_1$ if and only if $s+i$ is odd.  By (\ref{eqn-nu}), 
$$ 
-1 \equiv g^{s+i} \pmod{n_1}, \ \ -1 \equiv g^{s} \pmod{n_2} . 
$$ 
The desired conclusion then follows.  
\end{proof}

We will need the following proposition in the sequel. 

\begin{proposition}\label{prop-twoddd}
The integer $2 \in V_0$ and $2 \in D_0^{(n_j)}$ for all $j$ if and only if 
$
n_j \equiv \pm 1 \pmod{8}   
$
for all $j$. 
\end{proposition}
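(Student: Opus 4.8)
The plan is to mimic the template already used for Proposition~\ref{prop-twod} and Proposition~\ref{prop-threed}, turning the claim into a short bookkeeping argument based on the representation of Proposition~\ref{prop-Whiteman}. First I would record the classical supplement to the Law of Quadratic Reciprocity: for an odd prime $p$, the integer $2$ is a quadratic residue modulo $p$ if and only if $p \equiv \pm 1 \pmod{8}$. Applied to $p=n_1$ and $p=n_2$, this identifies the condition ``$2 \in D_0^{(n_j)}$ for all $j$'' with the right-hand side of the asserted equivalence, so the real content is to show that, granting this, the additional requirement $2 \in V_0$ is not an extra constraint.

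Next I would write $2 = g^s\nu^i$ for the unique pair $(s,i)$ with $0 \le s \le e-1$ and $0 \le i \le d-1$ guaranteed by Proposition~\ref{prop-Whiteman}, and reduce modulo $n_1$ and $n_2$ using the defining congruences (\ref{eqn-commonprimi}) and (\ref{eqn-nu}):
$$
2 \equiv g^{s+i} \pmod{n_1}, \qquad 2 \equiv g^{s} \pmod{n_2}.
$$
Since $g$ is a common primitive root of $n_1$ and $n_2$, a power $g^k$ is a quadratic residue modulo $n_1$ (resp.\ modulo $n_2$) exactly when $k$ is even. Hence $2 \in D_0^{(n_1)}$ iff $s+i$ is even, and $2 \in D_0^{(n_2)}$ iff $s$ is even. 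On the other hand, by the very definition of $V_0$ in Section~\ref{sec-cyclDD}, the element $2=g^s\nu^i$ lies in $V_0$ iff $s+i$ is even, which is exactly the condition $2 \in D_0^{(n_1)}$.

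Putting these together, ``$2 \in V_0$ and $2 \in D_0^{(n_j)}$ for all $j$'' collapses to ``$2 \in D_0^{(n_1)}$ and $2 \in D_0^{(n_2)}$'', and by the first step this holds iff $n_1 \equiv \pm 1 \pmod{8}$ and $n_2 \equiv \pm 1 \pmod{8}$. There is essentially no obstacle here; the only point requiring care is to attach each congruence to the correct prime and to observe that membership of $2$ in $V_0$ is governed by the same parity ($s+i$ even) that controls $2 \in D_0^{(n_1)}$, so the auxiliary exponent $i$ never has to be determined on its own.
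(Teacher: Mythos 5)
Your proof is correct and follows essentially the same route as the paper's: write $2=g^s\nu^i$ via Proposition~\ref{prop-Whiteman}, reduce modulo $n_1$ and $n_2$ to get $2\equiv g^{s+i}\pmod{n_1}$ and $2\equiv g^{s}\pmod{n_2}$, and match the parities of $s+i$ and $s$ with the quadratic-residue supplement for $2$. Your explicit remark that the condition $2\in V_0$ (namely $s+i$ even) coincides with $2\in D_0^{(n_1)}$, so the $V_0$ requirement is not an independent constraint, is exactly the step the paper leaves implicit in ``the desired conclusion then follows.''
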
 

\begin{proof} 
Recall that $\bZ_n^*=V_0 \cup V_1$. By Proposition \ref{prop-Whiteman}, there are two integers 
$0 \le s \le e-1$ and  $0 \le i \le d-1$ such that $2=g^s\nu^i$.  
It then follows from (\ref{eqn-commonprimi}) that 
$$ 
g^{s+i} \equiv 2 \pmod{n_1} \mbox{ and } g^{s} \equiv 2 \pmod{n_2}.   
$$ 
Hence, $s$ is even if and only if  $n_2 \equiv \pm 1 \pmod{8}$.  
Note that $2 \in V_0$ if and only if $s+i$ is even. The desired conclusion then follows.  
\end{proof} 

If $n_1 \equiv \pm 1 \pmod{8}$ and $n_2 \equiv \pm 1 \pmod{8}$, then we have the 
following factorization of $x^n-1$ over $\gf(2)$: 
$$
x^n-1=(x-1) d_0^{(n_1)}(x)  d_0^{(n_2)}(x) v_0(x)  d_1^{(n_1)}(x)  d_1^{(n_2)}(x) v_1(x),    
$$
and thus the eight binary cyclic codes $\V_{(i,j,h)}^{(n_1, n_2, 2)}$.  

\begin{example}\label{exam-2d} 
Let $(n_1, n_2, q)=(7, 17, 2)$. In this case we have 
\begin{eqnarray*}
V_0 &=& \left\{ \begin{array}{l} 
1, 2, 4, 8, 9, 11, 15, 16, 18, 22, 23, 25, 29, 30, 32, 36, 37, 39, 43, 44,
46, 50, 53, 57, 58, 60, \\ 64, 65, 67, 71, 72, 74, 78, 79, 81, 86, 88, 92, 93, 95,
99, 100, 106, 107, 109, 113, 114, 116 
                 \end{array}                  
                 \right\}, \\
V_1 &=& \left\{ \begin{array}{l} 
3, 5, 6, 10, 12, 13, 19, 20, 24, 26, 27, 31, 33, 38, 40, 41, 45, 47, 48,
52, 54, 55, 59, 61, 62, 66, \\ 69, 73, 75, 76, 80, 82, 83, 87, 89, 90, 94, 96, 97,
101, 103, 104, 108, 110, 111, 115, 117, 118 
                 \end{array}                  
                 \right\}                   
\end{eqnarray*} 
and 
\begin{eqnarray*}
d_0^{(n_1)}(x) &=& x^3+x+1, \\
d_1^{(n_1)}(x) &=& x^3+x^2+1, \\ 
d_0^{(n_2)}(x) &=& x^8 + x^7 + x^6 + x^4 + x^2 + x + 1, \\
d_1^{(n_2)}(x) &=& x^8 + x^5 + x^4 + x^3 + 1, \\ 
v_0(x) &=& x^{48} + x^{47} + x^{46} + x^{44} + x^{41} + x^{40} + x^{39} + x^{37} + x^{34} + x^{33} + x^{32} +
    x^{30} + x^{27} + x^{26} + \\ 
    & & x^{25} + x^{23} + x^{20} + x^{19} + x^{18} + x^{16} + x^{14} + x^{11} +
    x^{10} + x^9 + x^7 + x^4 + x^3 + x^2 + 1 \\
v_1(x) &=& x^{48} + x^{46} + x^{45} + x^{44} + x^{41} + x^{39} + x^{38} + x^{37} + x^{34} + x^{32} + x^{30} +
    x^{29} + x^{28} + x^{25} + \\ 
    & & x^{23} + x^{22} + x^{21} + x^{18} + x^{16} + x^{15} + x^{14} + x^{11} +
    x^9 + x^8 + x^7 + x^4 + x^2 + x + 1. 
\end{eqnarray*} 
The eight binary cyclic codes and their minimum weights are depicted in Table \ref{tab-D2}. 
All the eight binary cyclic codes defined by the generalized cyclotomy of order two in this 
section have poor minimum weights. However, this does not mean that the new generalized 
cyclotomy of order two is not interesting in coding theory. In the sequel, we will see 
that some of the quaternary cyclic codes based on this new cyclotomy are the best cyclic codes. 
\end{example}

\vspace{.25cm}
\begin{table}[ht]
\caption{The binary cyclic codes of length 119 and dimension 60 from the new cyclotomy}\label{tab-D2}
\begin{center}
{\begin{tabular}{|l|r|} \hline
Generator Polynomial & Minimum Weight \\\hline \hline
$\V_{(0,0,0)}^{(7, 17, 2)}$     &             8    \\ \hline 
$\V_{(1,0,0)}^{(7, 17, 2)}$   &               4 \\ \hline 
$\V_{(0,1,0)}^{(7, 17, 2)}$  &                4   \\ \hline 
$\V_{(0,0,1)}^{(7, 17, 2)}$   &               8  \\ \hline 
$\V_{(1,1,0)}^{(7, 17, 2)}$   &               8  \\ \hline
$\V_{(1,0,1)}^{(7, 17, 2)}$   &               4    \\ \hline 
$\V_{(0,1,1)}^{(7, 17, 2)}$   &               4 \\ \hline
$\V_{(1,1,1)}^{(7, 17, 2)}$  &                8    \\ \hline
\end{tabular}
}
\end{center}
\end{table}

\subsection{The ternary case} 

The proof of Proposition \ref{prop-threed} can be slightly modified into a proof of the 
following.  

\begin{proposition}\label{prop-threed2d}
The integer $3 \in D_0$ and $3 \bmod{n_i} \in D_0^{(n_i)}$ for all $i$ if and only if 
$$ 
n_1 \equiv \pm 1 \pmod{12} \mbox{ and } n_2 \equiv \pm 1 \pmod{12} . 
$$
\end{proposition}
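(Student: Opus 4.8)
The plan is to reuse the argument for Proposition~\ref{prop-threed} essentially verbatim, changing only the parity bookkeeping so that it matches the definition of the relevant order-two subgroup instead of Whiteman's $U_0$. First I would record the elementary number-theoretic fact that, for an odd prime $p$ with $p\neq 3$, the integer $3$ is a quadratic residue modulo $p$ if and only if $p\equiv\pm 1\pmod{12}$. This comes straight from the Law of Quadratic Reciprocity: $\left(\frac{3}{p}\right)=(-1)^{(p-1)/2}\left(\frac{p}{3}\right)$, and running through the four residue classes $p\equiv 1,5,7,11\pmod{12}$ shows the product of the two symbols is $+1$ exactly for $p\equiv 1$ and $p\equiv 11$. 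Applying this with $p=n_1$ and $p=n_2$ shows that ``$3\bmod n_i\in D_0^{(n_i)}$ for all $i$'' is equivalent to ``$n_1\equiv\pm 1\pmod{12}$ and $n_2\equiv\pm 1\pmod{12}$''.

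Next I would link the quadratic-residue hypothesis to subgroup membership, exactly as in the proof of Proposition~\ref{prop-threed}. By Proposition~\ref{prop-Whiteman} write $3=g^{s}\nu^{i}$ with $0\le s\le e-1$ and $0\le i\le d-1$; then (\ref{eqn-commonprimi}) and (\ref{eqn-nu}) give $3\equiv g^{s+i}\pmod{n_1}$ and $3\equiv g^{s}\pmod{n_2}$. Since $g$ reduces to a primitive root modulo each $n_j$, the integer $3$ is a quadratic residue modulo $n_1$ iff $s+i$ is even and modulo $n_2$ iff $s$ is even; when both hold, $s$ and $s+i$ --- hence also $i$ --- are even. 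Consulting the definition of $D_0$ (the elements $g^{t}\nu^{j}$ with $t$ even) one sees that $s$ even already forces $3\in D_0$, so the clause ``$3\in D_0$'' in the statement is implied by ``$3\bmod n_2\in D_0^{(n_2)}$'' and imposes no further condition. The stated biconditional therefore collapses to the reciprocity computation of the first step. (The same paragraph, read with ``$s+i$ even'' in place of ``$s$ even'', gives the corresponding statement for $V_0$, should that be the intended subgroup here.)

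There is no real obstacle: the only step requiring any care is the case analysis for $\left(\frac{3}{p}\right)$ in the first paragraph, and even that is standard. Everything else is a transcription of the already-established Proposition~\ref{prop-threed} with the parity condition adjusted, together with the observation that one of the two conjuncts in the statement is redundant.
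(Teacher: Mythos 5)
Your proposal is correct and follows exactly the route the paper intends: it is the proof of Proposition~\ref{prop-threed} with the parity bookkeeping adjusted, which is all the paper itself offers for this statement. You also rightly observe that the clause ``$3\in D_0$'' is redundant given the quadratic-residue hypotheses, and that in the context of Section~\ref{sec-lastnew} the intended subgroup is presumably $V_0$, for which your parenthetical variant (``$s+i$ even'') applies.
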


Hence, in the case that $n_1 \equiv \pm 1 \pmod{12}$ and $n_2 \equiv \pm 1 \pmod{12}$, 
we have indeed the eight ternary cyclic codes $\V_{(i,j,h)}^{(n_1, n_2, 3)}$. 

\begin{example} 
When $(n_1, n_2, q)=(11, 13, 3)$, the minimum nonzero weights of the eight ternary codes 
are given in Table \ref{tab-W23D}. 
Four of the eight codes have 
minimum weight 12, and the remaining four have minimum 
weight $6$.  
\end{example} 

\vspace{.25cm}
\begin{table}[ht]
\caption{The ternary cyclic codes of length 143 and dimension 72 from the cyclotomy of Section \ref{sec-cyclDD}}\label{tab-W23D}
\begin{center}
{\begin{tabular}{|l|r|} \hline
The code & Minimum Weight \\\hline \hline
$\V_{(0,0,0)}^{(11, 13, 3)}$   &          12       \\ \hline 
$\V_{(1,0,0)}^{(11, 13, 3)}$  &           6      \\ \hline 
$\V_{(0,1,0)}^{(11, 13, 3)}$  &           6      \\ \hline 
$\V_{(0,0,1)}^{(11, 13, 3)}$  &           12      \\ \hline 
$\V_{(1,1,0)}^{(11, 13, 3)}$  &          12       \\ \hline
$\V_{(1,0,1)}^{(11, 13, 3)}$  &           6      \\ \hline 
$\V_{(0,1,1)}^{(11, 13, 3)}$  &          6       \\ \hline
$\V_{(1,1,1)}^{(11, 13, 3)}$  &          12       \\ \hline
\end{tabular}
}
\end{center}
\end{table}

\subsection{The quaternary case}

The following proposition can be similarly proved. 

\begin{proposition} 
Assume that $n_1 \equiv \pm 1 \pmod{4}$ and $n_2 \equiv \pm 1 \pmod{4}$. Then 
$-1 \in V_1$ and $-1 \in D_1^{(n_j)}$ for all $j$ if and only if $n_j \equiv -1 \pmod{4}$ 
for all $j$. 
\end{proposition}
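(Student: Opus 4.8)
The plan is to mirror the proofs of the earlier ``$\pm 1 \pmod 8$'' and ``$\pm 1 \pmod{12}$'' propositions, writing $-1$ in the coordinate system supplied by Proposition \ref{prop-Whiteman} and reading off everything from the parities of the two resulting exponents. First I would fix $0 \le s \le e-1$ and $0 \le i \le d-1$ with $-1 = g^s \nu^i$; this is legitimate since $-1 \in \bZ_n^*$. Applying the defining congruences (\ref{eqn-nu}) for $\nu$ then gives
$$
-1 \equiv g^{s+i} \pmod{n_1}, \qquad -1 \equiv g^{s} \pmod{n_2},
$$
so the entire question is controlled by the two integers $s+i$ and $s$ modulo $2$.

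Next I would translate each of the three membership conditions in the statement into a parity condition. By the definition of $V_1$, we have $-1 \in V_1$ exactly when $s+i$ is odd. Since $g$ is a primitive root modulo $n_1$, the element $g^{s+i}$ is a quadratic nonresidue modulo $n_1$ precisely when $s+i$ is odd, so $-1 \in D_1^{(n_1)}$ is again equivalent to $s+i$ being odd; in particular the conditions $-1 \in V_1$ and $-1 \in D_1^{(n_1)}$ coincide. Likewise, because $g$ is a primitive root modulo $n_2$, the condition $-1 \in D_1^{(n_2)}$ is equivalent to $s$ being odd. Hence the conjunction ``$-1 \in V_1$ and $-1 \in D_1^{(n_j)}$ for all $j$'' collapses to the single requirement that both $s+i$ and $s$ are odd.

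Finally I would invoke the classical characterization that, for an odd prime $p$, the element $-1$ is a quadratic nonresidue modulo $p$ if and only if $p \equiv -1 \pmod 4$ (this is the elementary analogue of the reciprocity computation used in the proof of Proposition \ref{prop-threed}, and in fact needs no reciprocity). Applying it to $n_1$ via $-1 \equiv g^{s+i} \pmod{n_1}$ shows that $s+i$ is odd iff $n_1 \equiv -1 \pmod 4$, and applying it to $n_2$ via $-1 \equiv g^{s} \pmod{n_2}$ shows that $s$ is odd iff $n_2 \equiv -1 \pmod 4$. Combining this with the previous paragraph yields that the full condition holds iff $n_1 \equiv -1 \pmod 4$ and $n_2 \equiv -1 \pmod 4$, i.e.\ iff $n_j \equiv -1 \pmod 4$ for all $j$, which is exactly the claim.

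I do not expect any serious obstacle; the content is entirely bookkeeping. The one place to be careful is matching the correct exponent to the correct prime --- $s+i$ governs behaviour modulo $n_1$ while $s$ governs behaviour modulo $n_2$ --- and noting that $-1 \in V_1$ and $-1 \in D_1^{(n_1)}$ impose the identical parity constraint, so that the two genuinely independent constraints are the ones coming from $n_1$ and from $n_2$ separately. The standing hypothesis $n_1 \equiv \pm 1 \pmod 4$ and $n_2 \equiv \pm 1 \pmod 4$ is automatic for odd primes and merely records the setting in which the eight quaternary codes are defined.
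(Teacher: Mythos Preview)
Your proof is correct and follows exactly the approach the paper indicates: the paper merely states that this proposition ``can be similarly proved'' by reference to the analogous $\pm 1 \pmod 8$ version, whose proof writes $-1=g^s\nu^i$, reduces via the congruences for $\nu$, and reads off the parity conditions, which is precisely what you do. Your write-up is in fact more explicit than the paper's, correctly noting that $-1\in V_1$ and $-1\in D_1^{(n_1)}$ impose the same constraint on $s+i$, so that the two independent conditions come from $n_1$ and $n_2$ separately.
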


We will need the following proposition in the sequel whose proof is omitted here. 

\begin{proposition}\label{prop-fourddd}
The integer $4 \in V_0$ and $4 \bmod{n_j} \in D_0^{(n_j)}$ for all $j$  if 
$
n_j \equiv \pm 1 \pmod{4}   
$
for all $j$. 
\end{proposition}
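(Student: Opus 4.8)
The plan is to pin down the unique Whiteman representation $4=g^s\nu^i$ with $0\le s\le e-1$ and $0\le i\le d-1$ provided by Proposition \ref{prop-Whiteman}, and to read off the parity of $s+i$ from the fact that $4=2^2$ is a perfect square. Recall the bookkeeping already used in the proofs of Propositions \ref{prop-threed} and \ref{prop-twoddd}: by (\ref{eqn-commonprimi}) and (\ref{eqn-nu}), an element $g^s\nu^i$ reduces to $g^{s+i}\pmod{n_1}$ and to $g^{s}\pmod{n_2}$, while $g$ has order $n_1-1$ modulo $n_1$ and order $n_2-1$ modulo $n_2$.

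First I would fix the canonical representation $2=g^a\nu^b$, so that $2\equiv g^{a+b}\pmod{n_1}$ and $2\equiv g^{a}\pmod{n_2}$. Squaring these congruences gives $4\equiv g^{2(a+b)}\pmod{n_1}$ and $4\equiv g^{2a}\pmod{n_2}$. Now compare with the canonical representation $4=g^s\nu^i$: the congruence $g^{s+i}\equiv g^{2(a+b)}\pmod{n_1}$ forces $s+i\equiv 2(a+b)\pmod{n_1-1}$, and $g^{s}\equiv g^{2a}\pmod{n_2}$ forces $s\equiv 2a\pmod{n_2-1}$. Since $n_1$ and $n_2$ are odd, $n_1-1$ and $n_2-1$ are even, so both $s+i$ and $s$ are even; hence $i=(s+i)-s$ is even and, in particular, $s+i$ is even. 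By the definition of $V_0$ in Section \ref{sec-cyclDD}, this says $4\in V_0$. It then remains only to observe that $4=2^2$ is a nonzero square modulo every odd prime, so $4\bmod n_j\in D_0^{(n_j)}$ for each $j$, which is the second half of the claim.

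The one point that needs care is that membership in $V_0$ must be tested with the representative whose exponents lie in the prescribed ranges; uniqueness of that representative follows from $ed=(n_1-1)(n_2-1)=|\bZ_n^*|$, so the $ed$ pairs $(s,i)$ are in bijection with $\bZ_n^*$. Beyond this there is no real obstacle: the argument in fact yields the conclusion unconditionally, the stated hypothesis being automatic for odd primes, which is consistent with Propositions \ref{prop-fourdd} and \ref{prop-fourd} where the analogous facts for $U_0$ and $D_0$ are likewise forced by $4$ being a square. Equivalently, one may argue structurally: by Proposition \ref{prop-basisgroup22}, $V_0$ is the index-two subgroup $\{g^s\nu^i:\ s+i\text{ even}\}$ of $\bZ_n^*$, and the parameter $s+i$ attached to a square $z^2$ is $\equiv 2(s+i)$ modulo the even number $n_1-1$, for any preimage $z=g^s\nu^i$, hence even; so every square of $\bZ_n^*$ lies in $V_0$, and in particular $4\in V_0$.
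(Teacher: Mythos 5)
Your proof is correct. The paper itself omits the proof of this proposition ("whose proof is omitted here"), but your argument is exactly in the style of the paper's proofs of the analogous Propositions~\ref{prop-twoddd} and \ref{prop-threed}: reduce the canonical Whiteman representation $4=g^{s}\nu^{i}$ modulo $n_1$ and $n_2$, note that $4=2^2$ forces $s+i$ and $s$ to be even modulo the even numbers $n_1-1$ and $n_2-1$, and conclude $4\in V_0$ and $4\bmod n_j\in D_0^{(n_j)}$. Your observation that the stated hypothesis $n_j\equiv\pm1\pmod 4$ is automatic for odd primes (so the conclusion holds unconditionally, which is why the proposition is only an "if" rather than an "if and only if") is also accurate and worth having on record.
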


If $n_1 \equiv \pm 1 \pmod{4}$ and $n_2 \equiv \pm 1 \pmod{4}$, then we have the 
following factorization of $x^n-1$ over $\gf(4)$: 
$$
x^n-1=(x-1) d_0^{(n_1)}(x)  d_0^{(n_2)}(x) v_0(x)  d_1^{(n_1)}(x)  d_1^{(n_2)}(x) v_1(x),    
$$
and thus the eight binary cyclic codes $\V_{(i,j,h)}^{(n_1, n_2, 4)}$.  

\begin{example}\label{exam-4d} 
Let $(n_1, n_2, q)=(5, 7, 4)$. In this case 
the eight quaternary cyclic codes and their minimum weights are depicted in Table \ref{tab-4D2}. 
Four of them are the best quaternary cyclic codes of length $35$ and dimension $18$ according 
to Table \ref{tab-mini4}. So this example shows that the new cyclotomy of order two is 
interesting in coding theory.   
\end{example}

\vspace{.25cm}
\begin{table}[ht]
\caption{The quaternary cyclic codes of length 35 and dimension 18 from the new cyclotomy}\label{tab-4D2}
\begin{center}
{\begin{tabular}{|l|r|} \hline
Generator Polynomial & Minimum Weight \\\hline \hline
$\V_{(0,0,0)}^{(5, 7, 4)}$     &             8    \\ \hline 
$\V_{(1,0,0)}^{(5, 7, 4)}$   &               4 \\ \hline 
$\V_{(0,1,0)}^{(5, 7, 4)}$  &                4   \\ \hline 
$\V_{(0,0,1)}^{(5, 7, 4)}$   &               8  \\ \hline 
$\V_{(1,1,0)}^{(5, 7, 4)}$   &               8  \\ \hline
$\V_{(1,0,1)}^{(5, 7, 4)}$   &               4    \\ \hline 
$\V_{(0,1,1)}^{(5, 7, 4)}$   &               4 \\ \hline
$\V_{(1,1,1)}^{(5, 7, 4)}$  &                8    \\ \hline
\end{tabular}
}
\end{center}
\end{table}

\section{Concluding remarks and open problems} 

Let $n$ be odd, and let $E_0$ and $E_1$ be two subsets of $\bZ_n\setminus \{0\}=\{1,2, 
\cdots, n-1\}$. Let $\mu$ be an invertible element of $\bZ_n$. 
A pair of sets $E_0$ and $E_1$, each of which is a union of nonzero 
$q$-cyclotomic cosets, forms a {\em splitting of $n$ given by $\mu$} if 
\begin{eqnarray*} 
\mu E_0=E_1, \ \mu E_1=E_0, \ E_0 \cup E_1 =\{1,2, \cdots, n-1\}. 
\end{eqnarray*}  
Clearly, if $E_0$ and $E_1$ is a splitting of $n$, we have 
\begin{eqnarray*} 
|E_0|=|E_1|=(n-1)/2. 
\end{eqnarray*} 
As before, let $\theta$ be a $n$th primitive root of unity over an extension 
field of $\gf(q)$. Define a pair of polynomials 
$$ 
g_i(x)=\prod_{j \in E_i} (x-\theta^j), \ j = 0, 1. 
$$
Since $E_i$ is the union of a number of $q$-cyclotomic cosets, each 
$g_i(x)$ must be over $\gf(q)$. The two codes of length $n$ over $\gf(q)$ 
with generator polynomials $g_0$ and $g_1$ are called a pair of duadic 
codes. The two codes of length $n$ over $\gf(q)$ 
with generator polynomials $(x-1)g_0(x)$ and $(x-1)g_1(x)$ are also   
called a pair of duadic codes. Duadic codes can also be defined in 
terms of idempotents, and include the quadratic residue codes. They were introduced 
and studied by Leon, Masley and Pless \cite{Leon84}, Leon \cite{Leon88}, 
Pless \cite{Ples86}, and Pless, Masley and Leon \cite{Ples87}, where a 
number of properties are described. Also all binary duadic codes of length 
until 241 are described in \cite{Ples87}. 

The cyclotomic cyclic codes presented in Sections \ref{sec-D} and 
\ref{sec-lastnew} should be duadic codes 
(see \cite{DLX,DP,Leon84,Leon88,Ples86,Ples87}). The contributions 
of Sections \ref{sec-D} and \ref{sec-lastnew} are the extension of an earlier 
cyclotomy of order two, the new cyclotomy of order two, and the 
cyclotomic constructions of the cyclic codes over $\gf(q)$. 
According to \cite[p. 233]{HPbook}, there are four binary duadic 
codes of length 119 and dimension $60$ and minimum weight 12.  
It is interesting that all the best binary duadic codes of length 119 
and dimension 60 are covered by the construction of Section \ref{sec-Ding}.

The contribution of Section \ref{sec-W} is the cyclotomic construction 
of the cyclic codes that contains some of the codes in \cite{DH99} as 
special cases. Experimental data shows that all the cyclic codes over 
$\gf(q)$ obtained from the cyclotomy $(U_0, U_1)$ are the best or 
almost the best cyclic codes with the same length and dimension. So it 
would be interesting to further investigate this cyclotomic construction 
of cyclic codes. Note that 
the codes of Section \ref{sec-W} are not duadic, though they are cyclotomic.    

Except for quadratic residue codes, it looks hard to develop general 
and tight lower bounds for duadic codes. The same looks true for the 
cyclotomic codes described in this paper. For specific cyclic codes 
obtained from the three constructions of this paper the BCH and other 
bounds described in \cite{LintW,EL} may be employed.  

In summary, all the three cyclotomic constructions described in this 
paper produce the best cyclic codes over certain fields $\gf(q)$, and  
are simple in structure. In addition, the extended cyclotomy and the new 
cyclotomy of order 2 may have applications in other areas. 

Finally, we mention that the cyclic codes described in this 
paper can be employed to construct secret sharing schemes \cite{CDY05}, 
authentication codes \cite{DW05} and frequency hopping sequences \cite{DFFJM}.       

\section*{Acknowledgments} 

The author is very grateful to the anonymous reviewers and Dr. Mario Blaum 
for their comments and suggestions that improved the quality of this paper.

\end{document}